\newtheorem{Thm}{Theorem}
\newtheorem{Lem}[Thm]{Lemma}
\newtheorem{Cor}[Thm]{Corollary}
\newtheorem{Prop}[Thm]{Proposition}
\theoremstyle{definition}
\newtheorem{Def}{Definition}
\newtheorem{Rem}{Remark}
\DeclareMathOperator{\lmod}{mod}
\DeclareMathOperator{\per}{per}
\begin{document}

\title{Distribution properties of compressing sequences derived from primitive sequences modulo odd prime powers}
\author{Yupeng Jiang and  Dongdai Lin\\\\ \quad State Key Laboratory Of Information Security,\\
Institute of Information Engineering,\\
Chinese Academy of Sciences, Beijing 100093, P.R. China\\ E-mail: \{jiangyupeng,ddlin\}@iie.ac.cn}
\date{}
\maketitle

\begin{abstract}
Let $\underline{a}$ and $\underline{b}$ be primitive sequences over $\mathbb{Z}/(p^e)$ with odd prime $p$
and $e\ge 2$. For certain compressing maps, we consider the distribution properties of compressing sequences
of $\underline{a}$ and $\underline{b}$, and prove that $\underline{a}=\underline{b}$ if the compressing sequences
are equal at the times $t$ such that $\alpha(t)=k$, where $\underline{\alpha}$ is a sequence related to
$\underline{a}$. We also discuss the $s$-uniform distribution property of compressing sequences. For some compressing maps,
we have that there exist different primitive sequences such that the compressing sequences are $s$-uniform. We also
discuss that compressing sequences can be $s$-uniform for how many elements $s$.
\end{abstract}

\textbf{Keywords:} \quad Compressing map, integer residue ring, linear recurring sequence, primitive sequence, s-uniform.

\section{Introduction}

A sequence over a ring $R$ is denoted by $\underline{a}=(a(t))_{t\ge 0}$ with each $a(t)$ belonging to $R$.
Moreover, if there are elements
$c_0,\ c_1,\ldots,\ c_{n-1}$ in $R$ such that $$a(i+n)=c_{n-1}a(i+n-1)+\cdots+c_1a(i+1)+c_0a(i)$$ holds for all
$i \ge 0$, then the sequence is called a linear recurring sequence of degree $n$ over $R$, generated by
$f(x)=x^n-c_{n-1}x^{n-1}-\cdots-c_1x-c_0$.

Let $p$ be a prime number, $e$ a positive integer and $\mathbb{Z}/(p^e)$ the integer residue
ring modulo $p^e$. We identify the elements of $\mathbb{Z}/(p^e)$ with the corresponding representatives in $\{0,1,2,\ldots, p^e-1\}$.
For two integers $m$ and $n$, the notation $[n]_{\lmod m}$ represents the least nonnegative integer of $n$ modulo $m$
and $[\underline{a}]_{\lmod m}=([a(t)]_{\lmod m})_{t \ge 0}$. These notations were
used by Zheng, Qi and Tian in \cite{ZhQT}. Then a recurring sequence over $\mathbb{Z}/(p^e)$ generated by
$f(x)=x^n-c_{n-1}x^{n-1}-\cdots-c_1x-c_0$ means that for all $i \ge 0$, $a(i) \in \{0,1,2,\ldots, p^e-1\}$ and
$$a(i+n)=[c_{n-1}a(i+n-1)+\cdots+c_1a(i+1)+c_0a(i)]_{\lmod p^e}.$$
Usually, the set of all sequences generated by $f(x)$ over $\mathbb{Z}/(p^e)$ is denoted by $G(f(x),p^e)$.

Let $f(x)$ be a monic polynomial of degree $n$ over $\mathbb{Z}/(p^e)$. If $[f(0)]_{\lmod p}\neq 0$, then there exists
a positive integer $T$ such that $x^T-1$ is divisible by $f(x)$ in $\mathbb{Z}/(p^e)[x]$. The smallest such positive $T$
is called the {\it least period} of $f(x)$ and denoted by $\per(f(x), p^e)$. Ward proved that $\per(f(x), p^e)\le p^{e-1}(p^n-1)$
\cite{Ward}. Polynomials reaching this bound are called {\it primitive} polynomials. A sequence
$\underline{a}$ is called a {\it primitive} sequence of order $n$ if $\underline{a}$ is generated by
a primitive polynomial of degree $n$ and $[\underline{a}]_{\lmod p}$ is not the all zero sequence. It can been shown that primitive sequences
of order $n$ have least period $p^{e-1}(p^n-1)$. For $e=1$, primitive sequences are just the well known m-sequences over
prime field $\mathbb{Z}/(p)$. The set of all primitive sequences generated by primitive polynomial $f(x)$
is usually denoted by $G'(f(x), p^e)$. More details of linear recurring sequences over integer residue rings can be found in \cite{KKMN}.

Let $\underline{a}=(a(t))_{t\ge 0}$ be a sequence over $\mathbb{Z}/(p^e)$. Each $a(t)$ has a unique $p$-adic
expansion as $$a(t)=a_0(t)+a_1(t)\cdot p+\cdots+a_{e-1}(t)\cdot p^{e-1},$$ with $a_i(t) \in \{0,1,\ldots,p-1\}$ for
all $0 \le i \le e-1$. The sequence $\underline{a_i}=(a_{i}(t))_{t\ge 0}$ is called the {\it $i$th-level}
sequence of $\underline{a}$, and
$$\underline{a}=\underline{a_0}+\underline{a_1}\cdot p+\cdots+\underline{a_{e-1}}\cdot p^{e-1}$$
is called the $p$-adic expansion of $\underline{a}$.

Let $f(x)$ be a primitive polynomial over $\mathbb{Z}/(p^e)$ and $\phi(x_0,x_1,\ldots,x_{e-1})$ be an $e$-variable polynomial
over $\mathbb{Z}/(p)$. We can induce a map from the set of primitive sequences $G'(f(x), p^e)$ to the set of sequences over
$\mathbb{Z}/(p)$ by the polynomial $\phi$. The new map is also denoted by $\phi$ and defined by
\begin{align*}
\phi :\quad & G'(f(x), p^e) \rightarrow (\mathbb{Z}/(p))^{\infty} \\
& \underline{a} \mapsto \phi(\underline{a_{0}}, \underline{a_1},\ldots, \underline{a_{e-1}})
=\big(\phi(a_0(t), a_1(t),\ldots, a_{e-1}(t))\big)_{t \ge 0}.
\end{align*}
The map $\phi$ is called a {\it compressing map} and $\phi(\underline{a_{0}}, \underline{a_1},\ldots, \underline{a_{e-1}})$ is called a
{\it compressing sequence}. $\phi(x_0,x_1,\ldots,x_{e-1})$ is called an {\it injective} function if $\phi$ is injective.

Huang and Dai in \cite[Theorem 1]{HD} and Kuzmin and Nechaev in \cite[Theorem 2]{KuNe1993} independently proved that
$\phi(x_0,x_1,\ldots,x_{e-1})=x_{e-1}$ is an injective function. Their result is presented as the following theorem.

\begin{Thm}
Let $f(x)$ be a primitive polynomial of degree $n$ over $\mathbb{Z}/(p^e)$. For $\underline{a}, \underline{b}\in G'(f(x), p^e)$,
$\underline{a}=\underline{b}$ if and only if $\underline{a_{e-1}}=\underline{b_{e-1}}$.
\end{Thm}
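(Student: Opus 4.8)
The plan is as follows. The forward implication is immediate, so I focus on the converse: assuming $\underline{a_{e-1}}=\underline{b_{e-1}}$, I want $\underline{a}=\underline{b}$. Put $\underline{c}=\underline{a}-\underline{b}$; since $G(f(x),p^e)$ is closed under termwise subtraction, $\underline{c}\in G(f(x),p^e)$, and it suffices to show that $\underline{c}$ is the all-zero sequence. Suppose not, and let $k$ be the largest integer with $0\le k\le e-1$ such that $p^k\mid c(t)$ for all $t\ge 0$; set $d(t)=c(t)/p^k\in\{0,1,\ldots,p^{e-k}-1\}$. Since $\underline{c}$ satisfies the recurrence of $f(x)$ modulo $p^e$ and each of its terms is divisible by $p^k$, the sequence $\underline{d}=(d(t))_{t\ge 0}$ satisfies that recurrence modulo $p^{e-k}$, i.e. $\underline{d}\in G([f(x)]_{\lmod p^{e-k}},p^{e-k})$; by maximality of $k$ we have $[\underline{d}]_{\lmod p}\neq\underline{0}$. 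As the reduction of a primitive polynomial modulo a smaller power of $p$ is again primitive, $\underline{d}$ is a primitive sequence over $\mathbb{Z}/(p^{e-k})$, and in particular $\underline{d_0}=[\underline{d}]_{\lmod p}$ is a nonzero m-sequence over $\mathbb{Z}/(p)$.

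The next step is to track the base-$p$ carries in the identity $a(t)=[b(t)+p^kd(t)]_{\lmod p^e}$. With the $p$-adic expansion $d(t)=\sum_{i=0}^{e-1-k}d_i(t)p^i$, the term $p^kd(t)=\sum_{i=k}^{e-1}d_{i-k}(t)p^i$ touches only positions $k,k+1,\ldots,e-1$; hence positions $0,\ldots,k-1$ are unchanged, and for $k\le j\le e-1$ one has $a_j(t)=[b_j(t)+d_{j-k}(t)+\varepsilon_j(t)]_{\lmod p}$, where $\varepsilon_k(t)=0$ and each propagated carry satisfies $\varepsilon_{j+1}(t)\in\{0,1\}$ (a carry-in plus two base-$p$ digits being at most $2p-1$). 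At the top level,
$$a_{e-1}(t)\equiv b_{e-1}(t)+d_{e-1-k}(t)+\varepsilon_{e-1}(t)\pmod{p},\qquad \varepsilon_{e-1}(t)\in\{0,1\}.$$
Since $\underline{a_{e-1}}=\underline{b_{e-1}}$, this forces $d_{e-1-k}(t)\equiv-\varepsilon_{e-1}(t)\pmod p$ for all $t$; as $d_{e-1-k}(t)$ is a base-$p$ digit, the highest-level sequence $\underline{d_{e-1-k}}$ of $\underline{d}$ is confined to the two-element set $\{0,p-1\}$.

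If $k=e-1$, then $p^kd(t)$ occupies position $e-1$ alone, so $\varepsilon_{e-1}\equiv0$ and $\underline{d}=\underline{d_0}$ is all-zero, contradicting $[\underline{d}]_{\lmod p}\neq\underline{0}$. Hence $k\le e-2$, so $\underline{d}$ is a primitive sequence over $\mathbb{Z}/(p^{m})$ with $m=e-k\ge 2$ whose highest-level sequence $\underline{d_{m-1}}$ takes values only in $\{0,p-1\}$. This contradicts the distribution property that for $m\ge 2$ the highest-level sequence of a primitive sequence over $\mathbb{Z}/(p^m)$ is surjective onto $\mathbb{Z}/(p)$ — equivalently, over a full period $p^{m-1}(p^n-1)$ each residue modulo $p$ occurs — since $p$ being odd makes $\{0,p-1\}$ a proper subset of $\mathbb{Z}/(p)$. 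Therefore $\underline{c}$ is all-zero and $\underline{a}=\underline{b}$.

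The step I expect to be the main obstacle is the distribution fact used at the end; note that only the weaker assertion ``$\underline{d_{m-1}}$ is not contained in $\{0,p-1\}$'' is actually needed, and that oddness of $p$ is essential there. To prove it I would use the Galois-ring trace representation of primitive sequences over $\mathbb{Z}/(p^m)$, writing $d(t)=\mathrm{Tr}(\theta\zeta^t)$ for a suitable generator $\zeta$ and unit $\theta$: the highest-level sequence then splits as the level-$(m-1)$ part of the trace of a Teichm\"{u}ller element taken along the orbit of $\zeta$ (a function independent of $\theta$ up to a cyclic shift) plus an m-sequence coming from the principal-unit part of $\theta$, and a short analysis of this composite shows that it meets every residue class modulo $p$. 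Since this is exactly the kind of distribution question the later sections of the paper develop, in the write-up one may alternatively quote it from the standard theory of level sequences over Galois rings.
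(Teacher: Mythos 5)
The paper does not actually prove this theorem; it is quoted from Huang--Dai and Kuzmin--Nechaev, so there is no in-paper proof to compare against line by line. Your argument is correct for odd $p$ and is self-contained except for one deferred ingredient. The skeleton --- write $\underline{a}-\underline{b}=p^{k}\underline{d}$ with $\underline{d}$ primitive over $\mathbb{Z}/(p^{e-k})$, then contradict a distribution property of the top level sequence of $\underline{d}$ --- is the same decomposition the paper itself uses in Lemma \ref{Lem:relation}. Your carry bookkeeping is sound ($\varepsilon_{k}=0$, propagated carries in $\{0,1\}$, the final carry killed by the reduction mod $p^{e}$), the confinement of $\underline{d_{e-1-k}}$ to $\{0,p-1\}$ follows, and the case $k=e-1$ is disposed of correctly.

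Two remarks. First, the distribution fact you flag as ``the main obstacle'' does not need Galois-ring trace representations, and your sketch of that route (splitting the top level of $\mathrm{Tr}(\theta\zeta^{t})$ into a Teichm\"{u}ller part plus an m-sequence) is too vague to stand as a proof as written. The clean way to close the gap with the paper's own tools is equality \eqref{eq:e ge 2} of Proposition \ref{Prop:recurring}: for a primitive $\underline{d}$ over $\mathbb{Z}/(p^{m})$ with $m\ge 2$ one has $d_{m-1}(t+jp^{m-2}T)\equiv d_{m-1}(t)+j\alpha(t)\pmod p$, where $\underline{\alpha}=[h_{f}(x)\underline{d_{0}}]_{\lmod p}$ is a nonzero m-sequence; choosing $t$ with $\alpha(t)\ne 0$ and letting $j$ run over $\{0,1,\ldots,p-1\}$ shows $\underline{d_{m-1}}$ takes every value in $\mathbb{Z}/(p)$. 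This is exactly the step in the proof of Lemma \ref{Lem:relation}, and it is not circular, since \eqref{eq:e ge 2} is derived from the congruence $x^{p^{i-1}T}\equiv 1+p^{i}h_{i}(x)\bmod f(x)$ rather than from the injectivity theorem. Second, your contradiction rests on $|\{0,p-1\}|<p$, so the argument genuinely requires $p$ odd; for $p=2$ the set $\{0,1\}$ is all of $\mathbb{Z}/(2)$ and the proof collapses, whereas the theorem as stated in the introduction does not exclude $p=2$. Within this paper's standing assumption that $p$ is odd this is harmless, but it should be stated explicitly.
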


The above theorem means that the sequence $\underline{a_{e-1}}$ over $\mathbb{Z}/(p)$ contains all the information of
the sequence $\underline{a}$ over $\mathbb{Z}/(p^e)$. Theoretically, one can recover $\underline{a}$ when given $\underline{a_{e-1}}$.
The cryptographic properties of such compressing sequences have been studied in \cite{KuNe1992,Kuzmin,Dai, DBG}.
From then on, more compressing maps have been proved to be injective. Especially, when $f(x)$ is
a strongly primitive polynomial(Definition \ref{Def:StronglyPrimitive} in the next section),
the following results are obtained for $p=2$ and odd prime $p$ respectively. For $p=2$, Qi, Yang and Zhou \cite{QYZ} proved that almost
all $e$-variable boolean functions containing $x_{e-1}$ are injective. For odd prime $p$, the following theorem has been proved in \cite{ZhuQ2004,ZhuQ2007b,TQ}.

\begin{Thm} \label{Thm:CompressingMap}
Let $f(x)$ be a strongly primitive polynomial of degree $n$ over $\mathbb{Z}/(p^e)$ with odd prime $p$ and $e \ge 2$.
Assume $g(x_{e-1}) \in \mathbb{Z}/(p)[x_{e-1}]$ with $1\le \deg g\le p-1$ and
$\eta(x_0,x_1,\ldots,x_{e-2})\in \mathbb{Z}/(p)[x_0, x_1, \ldots, x_{e-2}]$. Then the function
$$\phi(x_0, x_1, \ldots, x_{e-1})=g(x_{e-1})+\eta(x_0, x_1,\ldots, x_{e-2})$$ is an injective function.
\end{Thm}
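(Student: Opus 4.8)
\medskip\noindent\emph{Proof strategy.} The plan is to show that the equality of the compressing sequences, read off along suitable arithmetic progressions of times, forces $\underline a=\underline b$. Suppose $\underline a,\underline b\in G'(f(x),p^e)$ with $\phi(\underline a)=\phi(\underline b)$; level by level this is the single identity
\[
g\big(a_{e-1}(t)\big)-g\big(b_{e-1}(t)\big)=\eta\big(b_0(t),\ldots,b_{e-2}(t)\big)-\eta\big(a_0(t),\ldots,a_{e-2}(t)\big)\qquad(t\ge 0),
\]
whose right-hand side involves only the levels $0,\ldots,e-2$. By the Huang--Dai and Kuzmin--Nechaev theorem recalled above it suffices to reach $\underline{a_{e-1}}=\underline{b_{e-1}}$, so one assumes $\underline{a_{e-1}}\neq\underline{b_{e-1}}$ (hence $\underline a\neq\underline b$) and seeks a contradiction. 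Put $\delta:=p^{e-2}(p^n-1)$. Since $[\underline a]_{\lmod p^{e-1}}$ is primitive of order $n$ over $\mathbb Z/(p^{e-1})$, it has least period $\delta$, a proper divisor of $\per(f(x),p^e)=p^{e-1}(p^n-1)=p\delta$; hence on every coset $C_{t_0}=\{t_0+j\delta:0\le j<p\}$ all levels of $\underline a$ and of $\underline b$ below the top are constant, and only $\underline{a_{e-1}},\underline{b_{e-1}}$ can move.

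The engine is this. Working in the Galois ring $R=\mathbb Z/(p^e)[x]/(f(x))$, write $a(t)=\mathrm{Tr}(A\xi^t)$ and $b(t)=\mathrm{Tr}(B\xi^t)$ with $\xi$ a root of $f$ of order $p\delta$ and $A,B\in R\setminus pR$. Because $\xi^{\delta}$ has order $p$ it equals $1+p^{e-1}c$ with $\bar c\neq 0$, and expanding $(1+p^{e-1}c)^j$ modulo $p^e$ yields, along $C_{t_0}$,
\[
a_{e-1}(t_0+j\delta)=\big[a_{e-1}(t_0)+jv_a(t_0)\big]_{\lmod p},\qquad v_a(t_0)=\mathrm{tr}(\bar A\bar c\bar\xi^{t_0}),
\]
and likewise for $b$, with $v_b(t_0)=\mathrm{tr}(\bar B\bar c\bar\xi^{t_0})=v_a(t_0+s)$, where $s$ is the cyclic shift taking $\underline{a_0}$ to $\underline{b_0}$; both $v_a,v_b$ are nonzero $m$-sequences in $t_0$. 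Consequently, if one can choose $t_0$ with $v_a(t_0)\neq 0$ but $v_b(t_0)=0$, then on $C_{t_0}$ the lower levels and $b_{e-1}$ stay fixed while $a_{e-1}$ runs over all of $\mathbb Z/(p)$, so the identity forces $g$ to be constant on $\mathbb Z/(p)$; since $\mathbb Z/(p)$ is a field and $1\le\deg g\le p-1$, a polynomial of degree $\le p-1$ that is constant as a function must be a constant polynomial, contradicting $\deg g\ge 1$. If instead $\underline a$ and $\underline b$ agree up to level $e-2$ --- so $v_a=v_b$ and $b_{e-1}=a_{e-1}-u$ for a nonzero $m$-sequence $u$ of period dividing $\delta$, hence constant on $C_{t_0}$ --- one picks $t_0$ with $v_a(t_0)\neq 0$ and $u(t_0)\neq 0$ and obtains $g(x)=g(x-u(t_0))$ for all $x$, i.e.\ $g$ periodic with period coprime to $p$, again impossible. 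This last case, which is the heart of the statement, uses only elementary counting of zero sets of $m$-sequences.

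The main obstacle is that $v_b$, being $v_a$ shifted by $s$, may have exactly the same zero set as $v_a$; this happens precisely for the $p-1$ shifts $s$ with $\bar\xi^{s}\in(\mathbb Z/(p))^{*}$, i.e.\ when $\underline{b_0}=[\lambda\underline{a_0}]_{\lmod p}$ for some $\lambda=\bar\xi^{s}\in(\mathbb Z/(p))^{*}$. In those cases the increments of $a_{e-1}$ and $b_{e-1}$ along $C_{t_0}$ are proportional with ratio $\lambda$, and the identity yields, for each admissible $t_0$, a functional equation $g(x)=g(\lambda x+\nu(t_0))+C(t_0)$ on $\mathbb Z/(p)$; iterating around the multiplicative order of $\lambda$ forces $C(t_0)=0$, and then the affine symmetries of $g$ coming from two cosets $C_{t_0},C_{t_1}$ with $\nu(t_0)\neq\nu(t_1)$ compose to a nontrivial translation symmetry of $g$, once more forcing $g$ constant. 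Producing two such cosets --- and, more generally, running the argument when $\underline a$ and $\underline b$ first disagree at an intermediate level, or when the previous step only forces $g$ to be an affine polynomial (the instance $\deg g=1$) --- requires knowing that the relevant joint low-level and top-level data of $\underline a,\underline b$ are distributed essentially uniformly; this is estimated through Weil-type exponential sums over $R$ and is exactly where the hypothesis that $f$ is \emph{strongly primitive} is used (it also underlies the $s$-uniformity statements of the paper). These residual cases are then finished off by a standard descent through the $p$-adic levels. Throughout, the degree hypothesis $1\le\deg g\le p-1$ enters only at the very end, to turn ``$g$ constant, periodic, or affinely symmetric on $\mathbb Z/(p)$'' into a contradiction.
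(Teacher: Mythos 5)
Your outline correctly identifies the paper's central mechanism: along the cosets $t_0+j\,p^{e-2}T$ all levels below the top freeze while $a_{e-1}$ and $b_{e-1}$ translate by $j\alpha(t_0)$ and $j\beta(t_0)$ respectively (equation \eqref{eq:e ge 2}), and this turns $\phi(\underline{a})=\phi(\underline{b})$ into functional equations for $g$ on $\mathbb{Z}/(p)$; your disposal of the cases where the zero sets of $\underline{\alpha}$ and $\underline{\beta}$ differ, or where $\underline{a},\underline{b}$ already agree below the top level, is sound and matches the opening of the proof of Theorem \ref{Thm:degge2}. But the two hardest steps are exactly the ones you defer, and the tools you invoke for them are not the right ones. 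First, in the residual case $\underline{b_0}=\lambda\cdot\underline{a_0}$ your coset equations only yield $g(x)=g(\lambda x+\nu(t_0))+C(t_0)$; composing two cosets with $\nu(t_0)\neq\nu(t_1)$ kills $\deg g\ge 2$, but you must still handle the complementary case in which $\nu$ is the \emph{same} on every admissible coset, i.e.\ $b_{e-1}(t)=\delta+\lambda a_{e-1}(t)$ whenever $\alpha(t)=k$. Eliminating $\lambda\neq 1$ and then $\delta\neq 0$ there is the content of Lemma \ref{Lem:highestlevel}, whose proof is an exact carry-counting argument on $\underline{b}-\lambda\underline{a}$ (tracking the borrow $u(t)$ and invoking Lemma \ref{Lem:relation}), not an equidistribution estimate. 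Your appeal to ``Weil-type exponential sums'' is misplaced: all the distribution facts needed are exact statements about pairs of m-sequences and follow from Proposition \ref{Prop:LinearRelation}; the strongly primitive hypothesis enters only through $\deg h_f\ge 1$, which makes $\underline{\alpha}$ and $\underline{a_0}$ linearly independent so that $a(t)$ sweeps all of $\{0,\ldots,p^e-1\}$ on the fibre $\alpha(t)=k$. Approximate uniformity would not suffice for these steps, and no Weil bound is used anywhere in the paper.

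Second, the ``standard descent through the $p$-adic levels'' that you gesture at for the $\deg g=1$ case (and for first disagreement at an intermediate level) is where most of the work lies: descending from level $e-1$ to level $e-2$ uses the recurrences \eqref{eq:e ge 4} and \eqref{eq:e=3}, which contain carry terms $C_1(\cdot)$, and the induction hinges on extracting the coefficient of $x^{p-1}$ from the polynomial representation of $x\mapsto C_1(u+x)$ (Proposition \ref{Prop:carry}) to match it against the top-degree part of $\eta_{e-2}$ in $x_{e-2}$. None of this is visible in your sketch, and it cannot be replaced by a generic ``descent'': without the carry bookkeeping the functional equation at level $e-2$ does not even close up. A minor further point: the degree hypothesis $1\le\deg g\le p-1$ is not used ``only at the very end''; it is needed each time you convert ``$g$ is constant/translation-invariant as a function'' into a polynomial identity, and the condition $\deg g\le p-1$ is what makes $g(x)-g(x+\tau)$ have degree exactly $\deg g-1$. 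So the proposal is a correct skeleton of the paper's route (Theorems \ref{Thm:deg=1} and \ref{Thm:degge2}, which imply the stated theorem), but as written it has genuine gaps at Lemma \ref{Lem:highestlevel} and at the carry-based level descent.
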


The distribution properties of compressing sequences are also interesting. Now we recall some definitions in \cite{ZhQT}.

\begin{Def}
Let $\underline{a}=(a(t))_{t\ge 0}$, $\underline{b}=(b(t))_{t\ge 0}$ and $\underline{c}=(c(t))_{t\ge 0}$
be three sequences over $\mathbb{Z}/(p)$, and let $s,k \in \mathbb{Z}/(p)$. Sequences $\underline{a}$ and
$\underline{b}$ are called $s$-uniform, $s$-uniform with $\underline{c}$ and $s$-uniform with $\underline{c}|_{k}$,
respectively, if $a(t)=s$ iff $b(t)=s$ for all $t \ge 0$, for all $t \ge 0$ with $c(t)\neq 0$ and for all
$t \ge 0$ with $c(t)=k$.
\end{Def}

In \cite{ZhuQ2005} and \cite{ZhuQ2007a}, Zhu and Qi proved that when $e>1$, $\underline{a}=\underline{b}$ iff $\underline{a_{e-1}}$ and
$\underline{b_{e-1}}$ are $0$-uniform. Later, the same authors\cite{ZhuQ2008} obtained a further result that
$\underline{a}=\underline{b}$ iff $\underline{a_{e-1}}$ and
$\underline{b_{e-1}}$ are $0$-uniform with a certain sequence
$\underline{\alpha}$(definition given before Definition \ref{Def:StronglyPrimitive} in the next section).
In \cite{ZhQ2010}, under the condition that $f(x)$ is a strongly primitive polynomial, Zheng and Qi
extended the result to the following more general one. Let $$\phi(x_0, x_1, \ldots, x_{e-1})=x_{e-1}+\eta(x_0, x_1,\ldots, x_{e-2})$$
with the coefficient of $x_{e-2}^{p-1}\cdots x_1^{p-1}x_0^{p-1}$ in $\eta$ is not equal to $(-1)^e\cdot\tfrac{p+1}{2}$.
Then their result is $\underline{a}=\underline{b}$ if and only if there exist $s\in \mathbb{Z}/(p)$ such that
$\phi(\underline{a_{0}}, \underline{a_1},\ldots, \underline{a_{e-1}})$ and
$\phi(\underline{b_{0}}, \underline{b_1},\ldots, \underline{b_{e-1}})$ are $s$-uniform with $\underline{\alpha}$.
Recently, the same author proved a stronger result\cite{ZhQT} which is stated in the following.

\begin{Thm} \label{Thm:s-uniform}
Let $f(x)$ be a strongly primitive polynomial of degree $n$ over $\mathbb{Z}/(p^e)$ with odd prime $p$ and $e\ge 2$.
Assume $$\phi(x_0, x_1, \ldots, x_{e-1})=x_{e-1}+\eta(x_0, x_1,\ldots, x_{e-2})$$
with the coefficient of $x_{e-2}^{p-1}\cdots x_1^{p-1}x_0^{p-1}$ in $\eta$ is not equal to $(-1)^e\cdot\frac{p+1}{2}$. Then
for $\underline{a}, \underline{b}\in G'(f(x), p^e)$, $\underline{a}=\underline{b}$ if and only if there exist
$s\in \mathbb{Z}/(p)$ and $k\in (\mathbb{Z}/(p))^{*}$ such that
$\phi(\underline{a_{0}}, \underline{a_1},\ldots, \underline{a_{e-1}})$ and
$\phi(\underline{b_{0}}, \underline{b_1},\ldots, \underline{b_{e-1}})$ are $s$-uniform with $\underline{\alpha}|_{k}$.
\end{Thm}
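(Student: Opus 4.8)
The plan is to reduce the ``only if'' direction to triviality and to concentrate all the work on the ``if'' direction. Suppose $\underline a,\underline b\in G'(f(x),p^e)$ and that there exist $s\in\mathbb{Z}/(p)$ and $k\in(\mathbb{Z}/(p))^{*}$ with $\phi(\underline{a_0},\dots,\underline{a_{e-1}})$ and $\phi(\underline{b_0},\dots,\underline{b_{e-1}})$ $s$-uniform with $\underline\alpha|_k$. I would first set up the standard structure for primitive sequences over $\mathbb{Z}/(p^e)$: pass to the trace/level-function description, so that $a_{e-1}(t)$ is expressed as $\theta(t)+\psi_a\big(a_0(t),\dots,a_{e-2}(t)\big)$ where $\theta$ is the ``top'' additive part shared in a controlled way, and recall how the auxiliary sequence $\underline\alpha$ is built from $\underline a$ (it should, by its definition in Section 2, be essentially the $0$th-level sequence of $\underline a$ or a simple function of the lower levels, so that $\alpha(t)=k$ carves out a positive-density, well-understood set of times). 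The first genuine step is to show that the hypothesis ``$s$-uniform with $\underline\alpha|_k$'' can be bootstrapped: using Theorem~\ref{Thm:CompressingMap} (injectivity of $g(x_{e-1})+\eta$) together with the counting/exponential-sum machinery that underlies Theorem~\ref{Thm:s-uniform}'s predecessors, the restriction to the single residue class $\{t:\alpha(t)=k\}$ already forces agreement of $\phi(\underline a)$ and $\phi(\underline b)$ on that class, and I would then propagate this to other classes by exploiting the shift structure of $\underline\alpha$ (a shift of $\underline a$ shifts $\underline\alpha$, and primitivity makes the orbit rich enough to reach every class $\alpha(t)=k'$ for $k'\ne 0$, and finally $\alpha(t)=0$).

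The technical heart will be a generating-function / Gauss-sum computation. Fix a nontrivial additive character of $\mathbb{Z}/(p)$ and consider, over the times $t$ in one period with $\alpha(t)=k$, the character sum detecting the event $\phi(a(t))=s$ versus $\phi(b(t))=s$. Writing $a(t)=b(t)+p\cdot c(t)$-type perturbations is not available a priori (that is what we want to prove), so instead I would parametrize $G'(f(x),p^e)$ by the residue $\bar a\in G'(\bar f,p)$ of the bottom level together with the finitely many ``lifting'' parameters, and express the difference of the two indicator-counts as a sum of products of Gauss sums attached to $g$ (here $g(x_{e-1})=x_{e-1}$, so these are the classical Gauss sums $G(\chi)$) and to $\eta$. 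The hypothesis that the coefficient of $x_{e-2}^{p-1}\cdots x_0^{p-1}$ in $\eta$ is not $(-1)^e\tfrac{p+1}{2}$ enters exactly here: it guarantees that a certain ``diagonal'' Gauss-sum coefficient is nonzero, so that the vanishing of the difference-count is equivalent to the vanishing of the underlying parameter differences, i.e.\ to $\underline a=\underline b$. I would isolate this as a lemma of the form: \emph{the weighted count $\sum_{t:\alpha(t)=k}\chi\big(\phi(a(t))-\phi(b(t))\big)$, suitably normalized, equals $p^{e-1}(p^n-1)/(p-1)$ precisely when $\underline a=\underline b$, and is bounded strictly below it otherwise}, which would be proved by the same non-vanishing-coefficient argument used in \cite{ZhQT} for the full-sequence case, localized to the class $\alpha=k$.

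With that lemma in hand, the ``if'' direction finishes quickly: $s$-uniformity with $\underline\alpha|_k$ says the two indicator sequences agree on $\{t:\alpha(t)=k\}$, hence the character sum attains its extreme value, hence by the lemma $\underline a=\underline b$. The ``only if'' direction is immediate since if $\underline a=\underline b$ then all levels coincide, so $\phi(\underline a)=\phi(\underline b)$ identically and in particular they are $s$-uniform with $\underline\alpha|_k$ for every $s$ and every $k$ (one just needs the class $\{t:\alpha(t)=k\}$ to be nonempty, which follows from the known distribution of $\underline\alpha$ for primitive $\underline a$). I expect the main obstacle to be the localized Gauss-sum estimate: bounding the error terms coming from the restriction $\alpha(t)=k$ (which is itself only an approximate equidistribution condition, controlled by character sums over the lower levels) tightly enough to conclude strict inequality whenever the parameters differ. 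This is where the strong primitivity of $f(x)$ and the precise exceptional value $(-1)^e\tfrac{p+1}{2}$ must be used in tandem; separating the ``main term'' (which sees only the top level, governed by the classical Gauss sum of absolute value $\sqrt p$) from the ``lower-level'' corrections, and showing the former dominates unless everything collapses, is the crux.
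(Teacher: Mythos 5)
First, a point of order: the paper does not prove this statement at all --- Theorem~\ref{Thm:s-uniform} is quoted verbatim from \cite{ZhQT} as background (``Recently, the same author proved a stronger result\dots''), so there is no in-paper proof to compare against. Judged on its own terms, your sketch has a genuine gap at its very first substantive step. You assert that the hypothesis of $s$-uniformity with $\underline{\alpha}|_k$ ``already forces agreement of $\phi(\underline{a})$ and $\phi(\underline{b})$ on that class,'' and later you rely on this again when you say that $s$-uniformity means ``the two indicator sequences agree on $\{t:\alpha(t)=k\}$, hence the character sum attains its extreme value.'' This conflates two very different conditions. $s$-uniformity with $\underline{\alpha}|_k$ only says that $\phi(a_0(t),\dots,a_{e-1}(t))=s$ if and only if $\phi(b_0(t),\dots,b_{e-1}(t))=s$ on that class; it says nothing about the times $t$ in the class where both values differ from $s$, and at such times $\chi\bigl(\phi(a(t))-\phi(b(t))\bigr)$ need not equal $1$. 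So your character sum $\sum_{t:\alpha(t)=k}\chi\bigl(\phi(a(t))-\phi(b(t))\bigr)$ does not attain its extreme value under the hypothesis, and the proposed lemma does not close the argument. Passing from the single-level-set information to pointwise agreement is precisely the hard content of the theorem: once pointwise agreement of the compressing sequences on $\{t:\alpha(t)=k\}$ is known, Theorem~\ref{Thm:deg=1} of the present paper already yields $\underline{a}=\underline{b}$, and it does so with \emph{no} condition on the coefficient of $x_{e-2}^{p-1}\cdots x_1^{p-1}x_0^{p-1}$.

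That last observation also shows that your proposed placement of the exceptional-coefficient hypothesis cannot be right. If the theorem reduced to your Gauss-sum lemma about $\sum\chi(\phi(a)-\phi(b))$, the condition on $\eta$ would never be needed; but it is genuinely necessary, as the counterexample in Remark~1 (from \cite[Theorem~21]{ZhQT}) shows: for $\underline{b}=-\underline{a}$ and $\eta$ with that exact top coefficient, the compressing sequences are $0$-uniform yet $\underline{a}\neq\underline{b}$. The correct shape of the argument (as in \cite{ZhQT} and its predecessors \cite{ZhuQ2005,ZhuQ2008,ZhQ2010}) is to show that the counting quantity $\#\{t:\alpha(t)=k,\ \phi(a(t))=s,\ \phi(b(t))\neq s\}$ is strictly positive whenever $\underline{a}\neq\underline{b}$, using the element-distribution results for level sequences (Propositions~\ref{Prop:recurring}, \ref{Prop:LinearRelation} and Lemma~\ref{Lem:relation}-type arguments) and invoking the coefficient condition, via Proposition~\ref{Prop:carry}, exactly to exclude the $\underline{b}=-\underline{a}$ degeneration. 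A secondary slip: the number of $t$ in one period with $\alpha(t)=k\neq 0$ is $p^{e-1}\cdot p^{n-1}$, not $p^{e-1}(p^n-1)/(p-1)$. The ``only if'' direction of your write-up is fine.
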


In this article, we will investigate that if $x_{e-1}$ is replaced by a general polynomial $g(x_{e-1})$ in the
above theorem, whether or not similar results can still hold. Unfortunately, the answer is negative. We obtain
that if there exists $k \in (\mathbb{Z}/(p))^{*}$ such that
$\phi(a_0(t), a_1(t), \ldots, a_{e-1}(t))=\phi(b_0(t), b_1(t), \ldots, b_{e-1}(t))$
for all $t$ satisfying $\alpha(t)=k$, then $\underline{a}=\underline{b}$.
This result is stronger than Theorem \ref{Thm:CompressingMap}, and we devote Section 3 to prove it. In Section 4,
we consider that for some $g(x_{e-1})$ and different $\underline{a}$ and $\underline{b}$,
$\phi(\underline{a_{0}}, \underline{a_1},\ldots, \underline{a_{e-1}})$ and
$\phi(\underline{b_{0}}, \underline{b_1},\ldots, \underline{b_{e-1}})$ can be $s$-uniform for how many elements in $\mathbb{Z}/(p)$.
In the next section, we recall some facts about primitive sequences modulo odd
prime powers.

\section{Preliminaries}
In this section, we will introduce some facts about sequences over integer residue rings. we only consider
the case that $p$ is an odd prime.

For sequences $\underline{a}=(a(t))_{t\ge 0}$, $\underline{b}=(b(t))_{t \ge 0}$ over $\mathbb{Z}/(p^e)$,
and $c\in \mathbb{Z}/(p^e)$, we have the following operation:
$$\underline{a}+\underline{b}=([a(t)+b(t)]_{\lmod p^e})_{t\ge 0}, \quad c\cdot\underline{a}=([c\cdot a(t)]_{\lmod p^e})_{t\ge 0},$$
$$x^k\underline{a}=(a(t+k))_{t\ge 0}.$$
Then the operation of a polynomial $g(x)=\sum_{k=0}^{n}c_{k}x^k \in \mathbb{Z}/(p^e)[x]$ on the sequence
$\underline{a}$ as
$$g(x)\underline{a}=\sum_{k=0}^{n}c_k\cdot x^k\underline{a}.$$

If $f(x)$ is a primitive polynomial of degree $n$ over $\mathbb{Z}/(p^e)$, it is known\cite{HD} that
there exist polynomials $h_{i}(x)$ of degree less than $n$ over $\mathbb{Z}/(p^e)$ such that for $1\le i\le e$,
\begin{align} \label{eq:period}
 x^{p^{i-1}T}\equiv 1+p^i\cdot h_i(x)\mod {f(x)},
\end{align}
where $T=p^n-1$ and $h_1(x)\equiv h_2(x)\equiv\cdots\equiv h_e(x)\not\equiv 0\mod p$. For a given $f(x)$,
denote by $h_{f}(x)$ the polynomial of $h_1(x)$ modulo $p$. The sequence $\underline{\alpha}$ over
$\mathbb{Z}/(p)$ mentioned in the last section is defined to be $[h_{f}(x)\underline{a_0}]_{\lmod p}$.

\begin{Def} \label{Def:StronglyPrimitive}
Let $f(x)$ be a primitive polynomial and $h_{f}(x)\equiv h_{i}(x)\mod p$. If $h_{f}(x)$ is not constant,
i.e., $\deg (h_{f}(x))\ge 1$, then $f(x)$ is called a strongly primitive polynomial.
\end{Def}

The following definition is given in \cite{TQ}. It is used to deal with carries.

\begin{Def}
For $a=a_0+a_1\cdot p+\cdots+a_{e-1}\cdot p^{e-1}\in \mathbb{Z}/(p^e)$, Define a function
\begin{align*}
C_{1}:\quad \mathbb{Z}/(p^e) &\rightarrow\mathbb{Z}/(p)\\
a &\mapsto C_{1}(a)=a_1.
\end{align*}
For a sequence $\underline{a}=\underline{a_0}+\underline{a_1}\cdot p+\cdots+\underline{a_{e-1}}\cdot p^{e-1}$ over
$\mathbb{Z}/(p^e)$, Define $C_1(\underline{a})=\underline{a_1}$.
\end{Def}

For an element $u$ in $\mathbb{Z}/(p)=\{0,1,\ldots,p-1\}$, the function $C_1$ can induce a map from $\mathbb{Z}/(p)$ to itself by
$x\mapsto C_1(u+x)$. By Lagrange interpolation, each such function has a unique polynomial representation,
and we have the following result\cite{ZhQ2010}.

\begin{Prop}\label{Prop:carry}
The coefficient of $x^{p-1}$ in the polynomial representation  of the map $x \mapsto C_1(u+x)$ is $-u$.
\end{Prop}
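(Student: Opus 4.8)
The plan is to reduce the proposition to a standard interpolation identity over $\mathbb{Z}/(p)$ together with a count of the $x$ that cause a carry. First I would make the map $x\mapsto C_1(u+x)$ completely explicit. Since $u$ and $x$ both lie in $\{0,1,\ldots,p-1\}$ and $e\ge 2$, the integer $u+x$ lies in $\{0,1,\ldots,2p-2\}$, and because $2p-2<p^2\le p^e$ this integer is already the chosen representative of $u+x$ in $\mathbb{Z}/(p^e)$. Its $p$-adic expansion is thus $u+x=d_0+d_1\cdot p$ with $d_1\in\{0,1\}$, where $d_1=1$ precisely when $u+x\ge p$. Hence the map is the \emph{carry indicator}: $C_1(u+x)=1$ if $x\ge p-u$, and $C_1(u+x)=0$ otherwise.

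Next I would recall the Lagrange interpolation formula over $\mathbb{Z}/(p)$: for any $g:\mathbb{Z}/(p)\to\mathbb{Z}/(p)$, Fermat's little theorem shows that $1-(x-a)^{p-1}$ is the indicator of $\{a\}$, so the unique representing polynomial of degree $\le p-1$ is $g(x)=\sum_{a\in\mathbb{Z}/(p)}g(a)\bigl(1-(x-a)^{p-1}\bigr)$. Since the coefficient of $x^{p-1}$ in $(x-a)^{p-1}$ equals $1$ for every $a$, the coefficient of $x^{p-1}$ in the polynomial representation of $g$ is $-\sum_{a\in\mathbb{Z}/(p)}g(a)$.

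Finally I would apply this with $g(x)=C_1(u+x)$: by the first step, $\sum_{a\in\mathbb{Z}/(p)}C_1(u+a)$ counts the $a\in\{0,1,\ldots,p-1\}$ with $a\ge p-u$, which is exactly $u$ (the values $a=p-u,p-u+1,\ldots,p-1$ when $u\ge 1$, and none when $u=0$), whence the coefficient of $x^{p-1}$ is $-u$. I do not expect a genuine obstacle; the only point needing care is checking that no higher-order carry occurs, i.e. that $u+x<p^2$, which is exactly where the standing hypothesis $e\ge 2$ enters, everything else being the routine interpolation identity.
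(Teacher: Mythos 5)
Your proof is correct. The paper itself does not prove this proposition---it only remarks ``By Lagrange interpolation, each such function has a unique polynomial representation'' and cites an external reference---but your argument is exactly the standard one being alluded to: write $g(x)=\sum_{a}g(a)\bigl(1-(x-a)^{p-1}\bigr)$, read off the $x^{p-1}$-coefficient as $-\sum_a g(a)$, and observe that $C_1(u+x)$ is the carry indicator of $x\ge p-u$, so the sum counts exactly $u$ values. All the details check out, including the observation that $u+x\le 2p-2<p^2\le p^e$ so no higher digit interferes.
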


Let the equality \eqref{eq:period} operates on a sequence $\underline{a}$ generated by $f(x)$ over $\mathbb{Z}/(p^e)$.
The following results can be proved. For details see \cite{TQ} and \cite{ZhQ2010}.
\begin{Prop}\label{Prop:recurring}
Let $f(x)$ be a primitive polynomial of degree $n$ over $\mathbb{Z}/(p^e)$. Assume
$\underline{a}\in G(f(x), p^e)$, $T=p^n-1$, $h_{i}(x)$ and $\underline{\alpha}$ are defined as above.
Then for integers $j \ge 0$, we have the following results.\\
{\rm (1)} The equality
\begin{align} \label{eq:e ge 2}
(x^{j\cdot p^{e-2}T}-1)\underline{a_{e-1}}\equiv j\cdot\underline{\alpha}\mod p
\end{align}
 holds for $e \ge 2$.\\
{\rm (2)} The equality
\begin{align} \label{eq:e ge 4}
(x^{j\cdot p^{e-3}T}-1)\underline{a_{e-1}}\equiv & j\cdot (h_{f}(x)\underline{a_1})+C_{1}(j\cdot (h_{e-2}(x)\underline{a_0})+\\ \notag
&C_{1}(\underline{a_{e-2}}+[j\cdot \underline{\alpha}]_{\lmod p}) \mod p
\end{align}
holds for $e\ge 4$.\\
{\rm (3)} The equality
\begin{align} \label{eq:e=3}
(x^{j\cdot T}-1)\underline{a_{2}} \equiv &{j \choose 2}h_f^2(x)\underline{a_0}+j\cdot (h_{f}(x)\underline{a_1})+\\ \notag
&C_{1}(j\cdot (h_{1}(x)\underline{a_0})+C_{1}(\underline{a_{1}}+[j\cdot \underline{\alpha}]_{\lmod p}) \mod p
\end{align}
holds.
\end{Prop}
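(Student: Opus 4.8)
The plan is to derive all three identities from the single congruence \eqref{eq:period} by raising it to the $j$-th power in $\mathbb{Z}/(p^e)[x]$, reducing modulo $f(x)$, letting the resulting polynomial act on $\underline{a}$ (using that $g(x)\underline{a}$ is the zero sequence in $G(f(x),p^e)$ whenever $f(x)\mid g(x)$ over $\mathbb{Z}/(p^e)$), and then reading off the $(e-1)$-st $p$-adic level of both sides while tracking, by means of $C_1$, the carries created when one sequence is added to another over $\mathbb{Z}/(p^e)$. I will use freely that $h_1(x)\equiv\cdots\equiv h_e(x)\equiv h_f(x)\pmod{p}$ and that $p^{m}c\pmod{p^e}$ is determined by the lowest $e-m$ $p$-adic digits of $c$.

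For {\rm (1)}, take $i=e-1$ in \eqref{eq:period}: $x^{p^{e-2}T}\equiv 1+p^{e-1}h_{e-1}(x)\pmod{f(x)}$. Since $e\ge 2$ we have $2(e-1)\ge e$, so $(1+p^{e-1}h_{e-1}(x))^{j}\equiv 1+j\cdot p^{e-1}h_{e-1}(x)\pmod{p^e}$ and hence $x^{j\cdot p^{e-2}T}\equiv 1+j\cdot p^{e-1}h_{e-1}(x)\pmod{f(x)}$ in $\mathbb{Z}/(p^e)[x]$. Acting on $\underline{a}$ gives $(x^{j\cdot p^{e-2}T}-1)\underline{a}=j\cdot p^{e-1}h_{e-1}(x)\underline{a}$ over $\mathbb{Z}/(p^e)$; the right-hand side, being $p^{e-1}$ times a sequence, is determined by the $0$th level of $h_{e-1}(x)\underline{a}$, which is $[h_f(x)\underline{a_0}]_{\lmod p}=\underline{\alpha}$, so it equals $p^{e-1}[j\underline{\alpha}]_{\lmod p}$. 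Thus $x^{j\cdot p^{e-2}T}\underline{a}$ and $\underline{a}$ agree modulo $p^{e-1}$ and their level-$(e-1)$ sequences differ by $[j\underline{\alpha}]_{\lmod p}$, i.e. $(x^{j\cdot p^{e-2}T}-1)\underline{a_{e-1}}\equiv j\underline{\alpha}\pmod{p}$, which is \eqref{eq:e ge 2}.

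For {\rm (2)} and {\rm (3)}, take $i=e-2$ in \eqref{eq:period}: $x^{p^{e-3}T}\equiv 1+p^{e-2}h_{e-2}(x)\pmod{f(x)}$, and raise to the $j$-th power. Here $2(e-2)\ge e$ exactly when $e\ge 4$; so for $e\ge 4$ the quadratic and higher terms die modulo $p^e$ and $(x^{j\cdot p^{e-3}T}-1)\underline{a}=j\cdot p^{e-2}h_{e-2}(x)\underline{a}$, while for $e=3$ the term $\binom{j}{2}p^{2}h_1(x)^2$ survives and, being a multiple of $p^{e-1}=p^{2}$, contributes precisely $[\binom{j}{2}h_f^2(x)\underline{a_0}]_{\lmod p}$ to the level-$(e-1)$ sequence. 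Put $\underline{u}=h_{e-2}(x)\underline{a}$; modulo $p^2$ its $0$th level is $[h_{e-2}(x)\underline{a_0}]_{\lmod p}=\underline{\alpha}$ and (no carry being created between levels $0$ and $1$) its $1$st level is $C_1(h_{e-2}(x)\underline{a_0})+[h_f(x)\underline{a_1}]_{\lmod p}$. Consequently $j\cdot p^{e-2}\underline{u}$ has level-$(e-2)$ sequence $[j\underline{\alpha}]_{\lmod p}$ and level-$(e-1)$ sequence (mod $p$) equal to $C_1(j\underline{\alpha})+j\cdot C_1(h_{e-2}(x)\underline{a_0})+j\cdot h_f(x)\underline{a_1}$; and adding $j\cdot p^{e-2}\underline{u}$ to $\underline{a}$ contributes one further carry $C_1(\underline{a_{e-2}}+[j\underline{\alpha}]_{\lmod p})$ from level $e-2$ into level $e-1$. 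Collecting these contributions, and using the identity $C_1(j\underline{\alpha})+j\cdot C_1(h_{e-2}(x)\underline{a_0})\equiv C_1\bigl(j\cdot h_{e-2}(x)\underline{a_0}\bigr)\pmod{p}$, which holds because $[h_{e-2}(x)\underline{a_0}]_{\lmod p}=\underline{\alpha}$ (a one-line digit computation, cf. Proposition \ref{Prop:carry}), one obtains \eqref{eq:e ge 4} for $e\ge 4$ and, after adding the surviving $\binom{j}{2}h_f^2(x)\underline{a_0}$ term, \eqref{eq:e=3} for $e=3$.

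The only genuine obstacle is this last step: the carry bookkeeping in {\rm (2)} and {\rm (3)}. Because $C_1$ is not additive, the order in which sequences are added and then reduced matters, and one must keep track of which $h_i(x)$ enters at level $1$ (which is why $h_{e-2}(x)$, not just $h_f(x)$, appears inside the inner $C_1$). The safest route is to postpone reducing intermediate sequences modulo $p^e$: write $a(t)+j\cdot p^{e-2}u(t)$ (plus $\binom{j}{2}p^{2}(h_f^2(x)\underline{a_0})(t)$ when $e=3$) as an ordinary integer, peel off $a(t)\bmod p^{e-2}$, and compute the digits in positions $e-2$ and $e-1$ in a single pass, so that every carry appears as a genuine floor $\lfloor\,\cdot/p\rfloor$ and is converted to $C_1$ only at the end via $\lfloor n/p\rfloor\equiv C_1(n)\pmod{p}$.
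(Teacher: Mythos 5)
Your proposal is correct and follows exactly the route the paper itself only sketches (``let the equality \eqref{eq:period} operate on $\underline{a}$'', deferring details to \cite{TQ} and \cite{ZhQ2010}): binomially expand $(1+p^{i}h_i(x))^j$ modulo $p^e$, note which terms survive for $e\ge 2$, $e\ge 4$, $e=3$ respectively, and read off the top $p$-adic digit while converting each carry $\lfloor\cdot/p\rfloor$ into $C_1$. The one delicate identity you invoke, $C_1(j\underline{\alpha})+j\cdot C_1(h_{e-2}(x)\underline{a_0})\equiv C_1(j\cdot h_{e-2}(x)\underline{a_0})\pmod p$, does check out, so the argument is complete.
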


The following statements about periods of linear recurring sequences over $\mathbb{Z}/(p^e)$ can be proved similarly as in \cite{Dai}.
\begin{Prop} \label{Prop:Period}
Let $f(x)$ be a primitive polynomial of degree $n$ over $\mathbb{Z}/(p^e)$ and $\underline{a}\in G(f(x), p^e)$
have $p$-adic expansion $\underline{a}=\underline{a_0}+\underline{a_1}\cdot p+\cdots+\underline{a_{e-1}}\cdot p^{e-1}$.
Assume $T=p^n-1$. Then\\
{\rm (1)} if $\underline{a_0}\neq \underline{0}$, then $\per(\underline{a_i})=p^iT$ for $0\le i \le e-1$ and
$\per(\underline{a})=p^{e-1}T$,  \\
{\rm (2)} if $\underline{a_0}=\underline{a_1}=\cdots=\underline{a_{i-1}}=\underline{0}$ and
$\underline{a_i} \neq \underline{0}$ for $0\le i \le e-1$,
then $\per(\underline{a})=p^{e-1-i}T$.
\end{Prop}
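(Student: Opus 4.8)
The plan is to reduce everything to the structure of primitive sequences and the $p$-adic level description, together with the recurrence relations \eqref{eq:period}. First I would record the basic fact, following \cite{Dai}, that for a primitive sequence $\underline{c}\in G'(f(x),p^e)$ one has $\per(\underline{c})=p^{e-1}T$ and, more precisely, that the $0$th level $\underline{c_0}$ is an $m$-sequence of period $T=p^n-1$, while each higher level $\underline{c_i}$ satisfies $\per(\underline{c_i})=p^iT$. The latter is exactly where \eqref{eq:period} is used: applying $x^{p^{i-1}T}-1$ to $\underline{c}$ multiplies by $p^i h_i(x)$, and since $h_i(x)\not\equiv 0\bmod p$ and $\underline{c_0}$ is a nonzero $m$-sequence, the level-$i$ sequence is shifted by a nonzero multiple of $h_f(x)\underline{c_0}\neq\underline{0}$; hence $p^{i-1}T$ is not a period of $\underline{c_i}$, while $p^iT$ clearly is, giving $\per(\underline{c_i})=p^iT$. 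This handles the case $\underline{a_0}\neq\underline{0}$ in part (1), because then $\underline{a}$ is (a shift of) a primitive sequence: $\underline{a_0}$ nonzero modulo $p$ forces $\underline{a}\in G'(f(x),p^e)$.

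For part (2), suppose $\underline{a_0}=\cdots=\underline{a_{i-1}}=\underline{0}$ and $\underline{a_i}\neq\underline{0}$. The key step is to observe that $p^i\underline{a}\bmod p^e$ is then equal to $\underline{a}$ itself viewed through its top $e-i$ levels, and more usefully, that $\underline{a}$ generated by $f(x)$ over $\mathbb{Z}/(p^e)$ with these vanishing low levels means $\underline{a}=p^i\underline{b}$ for some $\underline{b}\in G(f(x),p^e)$ whose $0$th level is $\underline{a_i}\neq\underline{0}$; one checks $\underline{b}$ is generated by $f(x)$ over $\mathbb{Z}/(p^{e-i})$ and is primitive there since its lowest level is a nonzero solution of $f(x)$ modulo $p$. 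Then $\per(\underline{a})=\per(\underline{b})$ as sequences, and by part (1) applied over $\mathbb{Z}/(p^{e-i})$ this is $p^{(e-i)-1}T=p^{e-1-i}T$. I would also note the edge case $i=e-1$: then $\underline{a}=p^{e-1}\underline{a_{e-1}}$ with $\underline{a_{e-1}}$ a nonzero $m$-sequence, so $\per(\underline{a})=T=p^{e-1-(e-1)}T$, consistent with the formula.

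To finish part (1), having $\per(\underline{a_i})=p^iT$ for all $i$, the period of $\underline{a}=\sum_i \underline{a_i}p^i$ is $\mathrm{lcm}(\per(\underline{a_0}),\ldots,\per(\underline{a_{e-1}}))=\mathrm{lcm}(T,pT,\ldots,p^{e-1}T)=p^{e-1}T$, which also matches $\per(f(x),p^e)$ as it must. The main obstacle I anticipate is not any single computation but the careful bookkeeping in the level-by-level period argument: one must argue that applying $x^{p^{i-1}T}-1$ genuinely changes $\underline{a_i}$ (i.e. that the shift $h_f(x)\underline{a_0}$ is not itself a period-$p^{i-1}T$ artifact), and this rests on $\underline{a_0}$ being a genuine $m$-sequence of full period $T$ together with $h_f(x)\not\equiv 0\bmod p$ and $\deg h_f<n$, so that $h_f(x)\underline{a_0}$ is a nonzero element of the same cyclic $m$-sequence family and in particular not the zero sequence. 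Everything else is a routine induction on $e$ and extraction of levels, so I would keep those parts brief and cite \cite{Dai} and \cite{HD} for the analogous statements modulo $p^e$.
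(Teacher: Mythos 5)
The paper gives no proof of Proposition \ref{Prop:Period}; it simply defers to \cite{Dai}, so your argument can only be judged on its own terms. Its overall shape---deduce part (2) from part (1) by writing $\underline{a}=p^i\underline{b}$ with $\underline{b}\in G'(f(x),p^{e-i})$, and prove part (1) from the congruence \eqref{eq:period}---is the standard route, and the reduction in part (2), together with the observation that $\per(\underline{a})$ is the least common multiple of the level periods, is fine.

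There is, however, a gap at the key step of part (1). From $(x^{p^{i-1}T}-1)\underline{a}=p^i h_i(x)\underline{a}$ you correctly get $a_i(t+p^{i-1}T)-a_i(t)\equiv (h_f(x)a_0)(t)\pmod p$, hence that $p^{i-1}T$ is not a period of $\underline{a_i}$ while $p^iT$ is. But ``$p^iT$ is a period and $p^{i-1}T$ is not'' does not yield $\per(\underline{a_i})=p^iT$: the least period is some divisor $d$ of $p^iT$, and excluding $p^{i-1}T$ only forces $p^i\mid d$, leaving open $d=p^im$ with $m$ a proper divisor of $T$ (such a $d$ does not divide $p^{i-1}T$, so your argument is silent about it; a sequence can fail to admit $p^{i-1}T$ as a period and still have least period $p^i m$). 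You must additionally show $T\mid d$. This follows from the same relation: if $d$ is a period of $\underline{a_i}$, then comparing $a_i(t+d+p^{i-1}T)-a_i(t+d)$ with $a_i(t+p^{i-1}T)-a_i(t)$ shows that $d$ is a period of $h_f(x)\underline{a_0}$, which is a nonzero m-sequence of least period $T$ (here you use $h_f\not\equiv 0 \bmod p$ and $\deg h_f<n$, as you note); hence $T\mid d$, and together with $d\mid p^iT$ and $d\nmid p^{i-1}T$ this forces $d=p^iT$. Alternatively, one can invoke Theorem 1 applied to $[\underline{a}]_{\lmod p^{i+1}}$: since $x^d\underline{a_i}=\underline{a_i}$ and the top level determines the sequence, $d$ must be a period of $[\underline{a}]_{\lmod p^{i+1}}$, whence $p^iT\mid d$. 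With either repair the rest of your proposal goes through.
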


For m-sequences over $\mathbb{Z}/(p)$, The following results are well known.

\begin{Prop} \label{Prop:LinearRelation}
Let $f(x)$ be a primitive polynomial of degree $n$ over $\mathbb{Z}/(p)$, and $\underline{a}\in G(f(x), p)$,
$\underline{b} \in G'(f(x), p)$. Then \\
{\rm (1)} if $\underline{a}$ and $\underline{b}$ are linearly dependent over $\mathbb{Z}/(p)$ with
$\underline{a}=\lambda\cdot \underline{b}$, then the set $\{a(t)\mid b(t)=k\}$ is equal to $\{\lambda k\}$,\\
{\rm (2)} if $\underline{a}$ and $\underline{b}$ are linearly independent over $\mathbb{Z}/(p)$,
then $\{a(t)\mid b(t)=k\}=\{0, 1, \ldots, p-1\}$.
\end{Prop}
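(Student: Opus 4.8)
The plan is to reduce both parts to the classical equidistribution property of $m$-sequences: for a primitive polynomial $f$ of degree $n$ over $\mathbb{Z}/(p)$ and any $\underline{c}\in G'(f(x),p)$, within one period of length $p^n-1$ the value $0$ occurs exactly $p^{n-1}-1$ times and each nonzero value exactly $p^{n-1}$ times. Granting this, part (1) is immediate: if $\underline{a}=\lambda\cdot\underline{b}$, then $b(t)=k$ forces $a(t)=[\lambda k]_{\lmod p}$, so $\{a(t)\mid b(t)=k\}\subseteq\{\lambda k\}$, and since $\underline{b}$ is an $m$-sequence it attains the value $k$ for some $t$, whence the set equals $\{\lambda k\}$ (when $\lambda=0$ this still reads $\{0\}$).

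For part (2) I would first note that $\underline{a}$ and $\underline{b}$ must both be nonzero, hence both lie in $G'(f(x),p)$, since $G(f(x),p)$ is an $n$-dimensional $\mathbb{Z}/(p)$-vector space whose nonzero elements are exactly the $m$-sequences generated by $f$. The key structural remark is that for every $(u,v)\in(\mathbb{Z}/(p))^2\setminus\{(0,0)\}$ the sequence $\underline{c}_{u,v}:=u\cdot\underline{a}+v\cdot\underline{b}$ again belongs to $G(f(x),p)$ and is nonzero --- otherwise $\underline{a}$ and $\underline{b}$ would be linearly dependent --- so $\underline{c}_{u,v}\in G'(f(x),p)$ and is equidistributed as above.

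Then I would fix $k\in\mathbb{Z}/(p)$ and count, for each $s$,
\[
N(s,k)=\#\{\,0\le t< p^n-1 : a(t)=s,\ b(t)=k\,\},
\]
expanding the two constraints into additive characters: with $\zeta=e^{2\pi i/p}$,
\[
N(s,k)=\frac{1}{p^2}\sum_{u,v\in\mathbb{Z}/(p)}\zeta^{-us-vk}\sum_{t=0}^{p^n-2}\zeta^{\,u\,a(t)+v\,b(t)},
\]
where the inner sum is $p^n-1$ for $(u,v)=(0,0)$ and equals $\sum_t\zeta^{\,c_{u,v}(t)}=p^{n-1}\cdot 0-1=-1$ for $(u,v)\neq(0,0)$ by the equidistribution of $\underline{c}_{u,v}$. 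This yields $N(s,k)=p^{n-2}$ for $(s,k)\neq(0,0)$ and $N(0,0)=p^{n-2}-1$, so $N(s,k)\ge 1$ for every $s$ --- immediately when $k\neq0$, which is the case used in the applications --- and hence $\{a(t)\mid b(t)=k\}=\{0,1,\ldots,p-1\}$. An equivalent, character-free route uses the state map $t\mapsto(m(t),\ldots,m(t+n-1))$ of a fixed $m$-sequence $\underline{m}$, which runs bijectively over $(\mathbb{Z}/(p))^n\setminus\{0\}$; then $a(t)$ and $b(t)$ are the values of two linearly independent linear functionals of the state, $t\mapsto(a(t),b(t))$ is a surjective linear map $(\mathbb{Z}/(p))^n\to(\mathbb{Z}/(p))^2$ with fibres of size $p^{n-2}$, and one only removes the zero state from the fibre over $(0,0)$.

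The main obstacle is conceptual rather than computational: one must be sure that the combinations $\underline{c}_{u,v}$ remain \emph{primitive} (equivalently, stay nonzero) sequences, which is precisely where linear independence of $\underline{a}$ and $\underline{b}$ enters; once that is secured, the rest is bookkeeping around the standard $m$-sequence distribution, and the only borderline case ($s=k=0$ with $n=2$) does not occur in the intended applications, where $k$ is taken in $(\mathbb{Z}/(p))^{*}$.
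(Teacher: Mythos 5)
The paper offers no proof of this proposition at all --- it is introduced with ``the following results are well known'' and used as a black box --- so there is nothing to compare your argument against except its correctness, and your argument is correct and complete. Part (1) is the trivial direction, as you say. For part (2), both of your routes work and rest on the same two facts: that every nonzero $u\cdot\underline{a}+v\cdot\underline{b}$ stays in $G'(f(x),p)$ precisely because of linear independence, and the standard one-period value distribution of an $m$-sequence ($0$ occurring $p^{n-1}-1$ times, each nonzero value $p^{n-1}$ times); the state-map version is the cleaner of the two, since it gives the joint counts $N(s,k)$ directly from the fibres of a surjective linear map $(\mathbb{Z}/(p))^n\to(\mathbb{Z}/(p))^2$ restricted to the nonzero states. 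Your observation that the statement as literally written fails in the single case $n=2$, $s=k=0$ (where $N(0,0)=p^{n-2}-1=0$) is a genuine, if harmless, defect of the proposition rather than of your proof: every invocation of part (2) in the paper takes $k\in(\mathbb{Z}/(p))^{*}$, and for $k\neq 0$ your count $N(s,k)=p^{n-2}\ge 1$ settles the claim for all $n\ge 2$ (and $n\ge 2$ is forced by the existence of two linearly independent sequences in $G(f(x),p)$). No gaps.
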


\section{Distribution at $\alpha(t)=k$}
In this section we always assume $p$ is an odd prime and $e \ge 2$.
Let $f(x)$ be a strongly primitive polynomial over $\mathbb{Z}/(p^e)$,
$\underline{a}, \underline{b}\in G'(f(x), p^e)$ and $\underline{\alpha}=[h_{f}(x)\underline{a_0}]_{\lmod p}$.
Assume $g(x_{e-1})\in\mathbb{Z}/(p)[x_{e-1}]$ and
$\eta_{e-2}(x_0, x_1, \ldots, x_{e-2}) \in \mathbb{Z}/(p)[x_0, x_1, \ldots, x_{e-1}]$. Let
$$\phi(x_0, x_1, \ldots, x_{e-1})=g(x_{e-1})+\eta_{e-2}(x_0, x_1, \ldots, x_{e-2}).$$
We will prove that for sequences $\underline{a}$, $\underline{b}\in G'(f(x), p^e)$,
$\underline{a}=\underline{b}$ if and only if there exists some $k \in (\mathbb{Z}/(p))^*$ such that
the compressing sequences
$\phi(a_{0}(t), a_1(t),\ldots, a_{e-1}(t))=
\phi(b_{0}(t), b_1(t),\ldots, b_{e-1}(t))$ at $t$ with
$\alpha(t)=k$.
We depart the proof into two cases: ($1$) $\deg g=1$ and ($2$) $2 \le \deg g \le p-1$.

\subsection{Case $\deg g=1$}
When $\deg g=1$, we can assume $g(x_{e-1})=x_{e-1}$ without loss of generality. In this case, we do not
need $f(x)$ to be a strongly primitive polynomial.  In the following, most equalities are regarded as
over $\mathbb{Z}/(p)$. we have
\begin{Thm}\label{Thm:deg=1}
Let $e \ge 2$ and $f(x)$ be a primitive polynomial of degree $n$ over $\mathbb{Z}/(p^e)$.
Assume $\underline{a},\ \underline{b}\in G'(f(x), p^e)$ and
$$\phi(x_0, x_1, \ldots, x_{e-1})=x_{e-1}+\eta_{e-2}(x_0, x_1, \ldots, x_{e-2}),$$
where $\eta_{e-2}$ is an $(e-1)$-variable polynomial over $\mathbb{Z}/(p)$. Then
$\underline{a}=\underline{b}$ if and only if there exists $k \in (\mathbb{Z}/(p))^*$ such that the compressing sequences
$\phi(a_{0}(t), a_1(t),\ldots, a_{e-1}(t))=
\phi(b_{0}(t), b_1(t), \ldots, b_{e-1}(t))$ at $t$ with
$\alpha(t)=k$.
\end{Thm}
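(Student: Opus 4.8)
The ``only if'' direction is immediate, so assume the hypothesis holds for some $k\in(\mathbb{Z}/(p))^{*}$; put $S=\{t\ge 0:\alpha(t)=k\}$, write $\underline{A},\underline{B}$ for the two compressing sequences, and set $\underline{\beta}=[h_{f}(x)\underline{b_0}]_{\lmod p}$. Since $0\neq h_{f}(x)$ has degree $<n$ and $[f(x)]_{\lmod p}$ is irreducible, multiplication by $h_{f}(x)$ is a bijection of $G([f(x)]_{\lmod p},p)$, so $\underline{\alpha},\underline{\beta}\in G'([f(x)]_{\lmod p},p)$ are nonzero m-sequences of period $T=p^{n}-1$ and in particular $S\neq\varnothing$. \textbf{Step 1 (reduce to $\underline{a_0}=\underline{b_0}$).} First I would apply the shift $x^{p^{e-2}T}$. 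By Proposition \ref{Prop:Period} the levels $\underline{a_0},\dots,\underline{a_{e-2}}$ have period dividing $p^{e-2}T$, and $\underline{\alpha}$ has period $T\mid p^{e-2}T$, so Proposition \ref{Prop:recurring}(1) gives $A(t+jp^{e-2}T)=A(t)+j\alpha(t)$ and $B(t+jp^{e-2}T)=B(t)+j\beta(t)$ over $\mathbb{Z}/(p)$ for all $t,j$. As $S$ is invariant under translation by $p^{e-2}T$ and $\underline{A}=\underline{B}$ on $S$, comparing these at $t$ and $t+p^{e-2}T$ for $t\in S$ yields $\beta(t)=\alpha(t)=k$ for every $t\in S$. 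Applying Proposition \ref{Prop:LinearRelation} to $\underline{\beta}\in G([f(x)]_{\lmod p},p)$ and $\underline{\alpha}\in G'([f(x)]_{\lmod p},p)$, linear independence is impossible (it would give $\{\beta(t):\alpha(t)=k\}=\mathbb{Z}/(p)$), so $\underline{\beta}=\lambda\underline{\alpha}$ with $\lambda k=k$, i.e.\ $\underline{\beta}=\underline{\alpha}$; then injectivity of multiplication by $h_{f}(x)$ gives $\underline{a_0}=\underline{b_0}$.

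\textbf{Step 2 (induction on $e$).} I would then induct on $e$. For $e=2$, since $\underline{a_0}=\underline{b_0}$ the $\eta_{e-2}$-terms in $A(t),B(t)$ coincide, so $A(t)=B(t)$ on $S$ reads $a_{1}(t)=b_{1}(t)$ on $S$; as $p(\underline{a_1}-\underline{b_1})=\underline{a}-\underline{b}\in G(f(x),p^{2})$, the sequence $\underline{d}=\underline{a_1}-\underline{b_1}$ over $\mathbb{Z}/(p)$ lies in $G([f(x)]_{\lmod p},p)$ and vanishes on $S$, so Proposition \ref{Prop:LinearRelation} (both alternatives use $k\neq 0$) forces $\underline{d}=\underline{0}$ and hence $\underline{a}=\underline{b}$. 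For $e\ge 3$, I would pass to the reductions $\underline{\tilde a}=[\underline{a}]_{\lmod p^{e-1}}$, $\underline{\tilde b}=[\underline{b}]_{\lmod p^{e-1}}\in G'([f(x)]_{\lmod p^{e-1}},p^{e-1})$, whose levels are $a_0,\dots,a_{e-2}$ and $b_0,\dots,b_{e-2}$ and whose associated sequence is again $\underline{\alpha}$; it suffices to produce $\psi(x_0,\dots,x_{e-2})=x_{e-2}+\mu(x_0,\dots,x_{e-3})$ with $\psi(\tilde a(t))=\psi(\tilde b(t))$ for all $t\in S$, for then the induction hypothesis yields $\underline{\tilde a}=\underline{\tilde b}$, i.e.\ $a_{i}=b_{i}$ for $0\le i\le e-2$. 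To get $\psi$ I would apply $x^{jp^{e-3}T}$: Proposition \ref{Prop:recurring}(1) for $\underline{\tilde a}$ gives $a_{e-2}(t+jp^{e-3}T)=[a_{e-2}(t)+j\alpha(t)]_{\lmod p}$ with lower levels fixed, and Proposition \ref{Prop:recurring}(2) (or (3) when $e=3$) expresses $a_{e-1}(t+jp^{e-3}T)$ through $\underline{a_1},\underline{a_0},\underline{a_{e-2}},\underline{\alpha}$ and the carry map $C_{1}$. Substituting into $A(t+jp^{e-3}T)=B(t+jp^{e-3}T)$ (valid since $t+jp^{e-3}T\in S$), cancelling the parts common to $\underline{a}$ and $\underline{b}$ via $\underline{a_0}=\underline{b_0}$, and letting $j$ range over $\mathbb{Z}/(p)$ while using Proposition \ref{Prop:carry} to control the carries, one should be able to isolate the relation $\psi(\tilde a(t))=\psi(\tilde b(t))$ on $S$.

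\textbf{Step 3 and the main obstacle.} Once $a_{i}=b_{i}$ for all $i\le e-2$, the sequences agree modulo $p^{e-1}$, so $\underline{a}-\underline{b}=p^{e-1}\underline{d}$ with $\underline{d}=\underline{a_{e-1}}-\underline{b_{e-1}}\in G([f(x)]_{\lmod p},p)$; the $\eta_{e-2}$-terms now coincide, so $A(t)-B(t)=d(t)$ and $\underline{d}$ vanishes on $S$, whence Proposition \ref{Prop:LinearRelation} gives $\underline{d}=\underline{0}$, i.e.\ $\underline{a_{e-1}}=\underline{b_{e-1}}$ and $\underline{a}=\underline{b}$ (one could instead quote Theorem \ref{Thm:CompressingMap} with $g(x_{e-1})=x_{e-1}$ here). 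Steps 1 and 3 are routine given Propositions \ref{Prop:recurring}, \ref{Prop:Period}, \ref{Prop:LinearRelation} and \ref{Prop:carry}; the hard part will be the inductive step, where one must patiently unwind the second-$p$-adic-digit bookkeeping buried in Proposition \ref{Prop:recurring}(2)--(3) (nested $C_{1}$'s applied to expressions like $j(h_{e-2}(x)\underline{a_0})+C_{1}(\underline{a_{e-2}}+[j\underline{\alpha}]_{\lmod p})$) and verify that the family of identities indexed by $j\in\mathbb{Z}/(p)$ really does pin down a compressing map of the shape $x_{e-2}+\mu(x_0,\dots,x_{e-3})$ on $S$. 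In particular the finitely many $t\in S$ at which this extraction degenerates (for instance where the relevant section of $\eta_{e-2}$ has too small a degree in $x_{e-2}$) will need a separate ad hoc treatment. This is the step I expect to be the main obstacle.
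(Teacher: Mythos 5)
Your outline follows the paper's proof essentially step for step: your Step 1 (forcing $\beta(t)=k$ on $S$ and hence $\underline{a_0}=\underline{b_0}$) and Step 3 (the level-by-level upward use of Proposition \ref{Prop:LinearRelation}) are exactly what the paper does, and your inductive production of a compressing map $x_{e-2}+\mu(x_0,\dots,x_{e-3})$ for the reductions modulo $p^{e-1}$ is the paper's descent reorganized as an induction on $e$. The one place where you stop short is precisely the step you flag as the main obstacle, and as written the proof is incomplete there, so let me record how the paper closes it.

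Fix $t\in S$ and subtract the $a$- and $b$-versions of \eqref{eq:e ge 4} (resp.\ \eqref{eq:e=3} when $e=3$). After cancelling everything depending only on $\underline{a_0}=\underline{b_0}$ and substituting $a_{e-2}(t+jp^{e-3}T)=a_{e-2}(t)+jk$ and $b_{e-2}(t+jp^{e-3}T)=b_{e-2}(t)+jk$, both sides become functions of $x=[jk]_{\lmod p}$ alone, say $L(x)=R(x)$ as maps $\mathbb{Z}/(p)\to\mathbb{Z}/(p)$. Such a map has a unique polynomial representative of degree at most $p-1$, so one may equate the coefficients of $x^{p-1}$. On the left, Proposition \ref{Prop:carry} gives $-a_{e-2}(t)+b_{e-2}(t)$ (the linear term $k^{-1}\tau(t)x$ contributes nothing since $p>2$). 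On the right, write $\eta_{e-2}=-\eta_{e-3}(x_0,\dots,x_{e-3})\cdot x_{e-2}^{p-1}+\rho_{e-2}$ with $\deg_{x_{e-2}}\rho_{e-2}<p-1$; only the $x_{e-2}^{p-1}$ part can produce an $x^{p-1}$ term under the shift $x_{e-2}\mapsto x_{e-2}+x$, giving $\eta_{e-3}(a_0(t),\dots,a_{e-3}(t))-\eta_{e-3}(b_0(t),\dots,b_{e-3}(t))$. Equating the two yields $a_{e-2}(t)+\eta_{e-3}(a_0(t),\dots)=b_{e-2}(t)+\eta_{e-3}(b_0(t),\dots)$, i.e.\ your $\psi$ with $\mu=\eta_{e-3}$ read off globally from $\eta_{e-2}$. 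In particular your worry about ``finitely many degenerate $t$'' is unfounded: the extraction is a single coefficient identity valid uniformly in $t\in S$, and if the relevant part of $\eta_{e-2}$ has degree less than $p-1$ in $x_{e-2}$ the right-hand coefficient is simply $0$ and the argument goes through unchanged. With this computation supplied, your proof coincides with the paper's.
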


\begin{proof}
Suppose $\alpha(t)=k\neq 0$, we have
\begin{align} \label{eq:main1}
a_{e-1}(t)+\eta_{e-2}(a_0(t), \ldots, a_{e-2}(t))=b_{e-1}(t)+\eta_{e-2}(b_0(t), \ldots, b_{e-2}(t)).
\end{align}
Let $T=p^n-1$. Then for all $j \ge 0$, $\alpha(t+j\cdot p^{e-2}T)=\alpha(t)=k$.
Since $p^{e-2}T$ is a period of $\underline{a_{i}}, \underline{b_j}$ for $i, j< e-1$, then if we replace $t$
by $j\cdot p^{e-2}T+t$ in \eqref{eq:main1} and minus it we have
$$a_{e-1}(j\cdot p^{e-2}T+t)-a_{e-1}(t)=b_{e-1}(j\cdot p^{e-2}T+t)-b_{e-1}(t).$$
Let $\underline{\beta}=[h_{f}(x)\underline{b_0}]_{\lmod p}$.
By \eqref{eq:e ge 2} in Proposition \ref{Prop:recurring}, we obtain $j\cdot \alpha(t)=j \cdot \beta(t)$ which
means that $\beta(t)=k$ when $\alpha(t)=k$. As $\underline{\alpha}, \underline{\beta} \in G'(f(x),p)$,
by Proposition \ref{Prop:LinearRelation}, we have $\underline{\alpha}=\underline{\beta}$, and then
$\underline{a_0}=\underline{b_0}$.

When $e \ge 4$, the equality \eqref{eq:e ge 4} in Proposition \ref{Prop:recurring} is just
\begin{align*}
a_{e-1}&(j\cdot p^{e-3}T+t)-a_{e-1}(t)= j\cdot (h_{f}(x)a_1)(t)+\\
&C_{1}(j\cdot (h_{e-2}(x)a_0)(t))+C_{1}(a_{e-2}(t)+[j\cdot \alpha(t)]_{\lmod p})
\end{align*}
Since the same equality holds also for $\underline{b_{e-1}}$ and
$\underline{a_0}=\underline{b_0}$, we have
\begin{align*}
j\cdot (h_{f}(x)a_1)(t)+C_{1}(a_{e-2}(t)+[j\cdot \alpha(t)]_{\lmod p})+&\\
\eta_{e-2}(a_0(t), \ldots, a_{e-2}(j\cdot p^{e-3}T+t))-\eta_{e-2}(a_0(t), \ldots, a_{e-2}(t))&=\\
\qquad j\cdot (h_{f}(x)b_1)(t)+C_{1}(b_{e-2}(t)+[j\cdot \beta(t)]_{\lmod p})+&\\
\eta_{e-2}(b_0(t), \ldots, b_{e-2}(j\cdot p^{e-3}T+t))-\eta_{e-2}(b_0(t), \ldots, b_{e-2}(t)).&
\end{align*}
As $\underline{a_0}=\underline{b_0}$, by \eqref{eq:e=3}, the above equation also holds for $e=3$.
Let $\underline{\tau}=h_{f}(x)(\underline{a_1}-\underline{b_1})$. From $\alpha(t)=\beta(t)=k$ and
replacing $e$ by $e-1$ in \eqref{eq:e ge 2}, we have
$$a_{e-2}(j\cdot p^{e-3}T+t)=a_{e-2}(t)+j\cdot k,$$
$$b_{e-2}(j\cdot p^{e-3}T+t)=b_{e-2}(t)+j\cdot k,$$
and then
\begin{align*}
j\cdot \tau(t)+C_{1}(a_{e-2}(t)+[j\cdot k]_{\lmod p})-C_{1}(b_{e-2}(t)+[j\cdot k]_{\lmod p})=\\
\eta_{e-2}(b_0(t), \ldots, b_{e-2}(t)+j\cdot k)-\eta_{e-2}(b_0(t), \ldots, b_{e-2}(t))-\\
\eta_{e-2}(a_0(t), \ldots, a_{e-2}(t)+j\cdot k)+\eta_{e-2}(a_0(t), \ldots, a_{e-2}(t))
\end{align*}
As $k\neq 0$, when $j$ runs over $\{0,1,\ldots, p-1\}$, $[j\cdot k]_{\lmod p}$ also runs over
$\{0,1,\ldots, p-1\}$. Let $[j\cdot k]_{\lmod p}=x$ and then $j=k^{-1}x$. Each side of the above equality can
be regarded as a function from $\mathbb{Z}/(p)$ to itself. As such a function can be represented by a
polynomial, we denote by $L(x)$ for the left side function and $R(x)$ for the right side. Then we have
$$L(x)=k^{-1}\tau(t)x+C_{1}(a_{e-2}(t)+x)-C_{1}(b_{e-2}(t)+x).$$
By Proposition \ref{Prop:carry}, the coefficient of $x^{p-1}$ in $L(x)$ is $-a_{e-2}(t)+b_{e-2}(t)$. Assume
$$\eta_{e-2}(x_0, \ldots, x_{e-2})=-\eta_{e-3}(x_0, \ldots, x_{e-3})\cdot x_{e-2}^{p-1}+\rho_{e-2}(x_0,\ldots, x_{e-2}),$$
where the degree of $x_{e-2}$ in $\rho_{e-2}(x_0, \ldots, x_{e-2})$ is less than $p-1$. Then the coefficient
of $x^{e-1}$ in $R(x)$ is $\eta_{e-3}(a_0(t), \ldots, a_{e-3}(t))-\eta_{e-3}(b_0(t), \ldots, b_{e-3}(t))$.
Now we have
$$-a_{e-2}(t)+b_{e-2}(t)=\eta_{e-3}(a_0(t), \ldots, a_{e-3}(t))-\eta_{e-3}(b_0(t), \ldots, b_{e-3}(t)).$$
and then
$$a_{e-2}(t)+\eta_{e-3}(a_0(t), \ldots, a_{e-3}(t))=b_{e-2}(t)+\eta_{e-3}(b_0(t), \ldots, b_{e-3}(t)),$$
which reduce $e-1$ in \eqref{eq:main1} to $e-2$.

By induction, we have that when $\alpha(t)=k$,
$$a_{i}(t)+\eta_{i-1}(a_0(t), \ldots, a_{i-1}(t))=b_{i}(t)+\eta_{i-1}(b_0(t), \ldots, b_{i-1}(t))$$
holds for $i=1, 2, \ldots, e-1$, where $\eta_{i-1}(x_0, \ldots, x_{i-1})$ is an $i$-variable polynomial.
Let $\underline{c}=\underline{a}-\underline{b}$. For $i=1$, since $\underline{a_0}=\underline{b_0}$,
we have $\underline{c_1} \in G(f(x), p)$ and $c_1(t)=a_1(t)-b_1(t)=0$ whenever $\alpha(t)=k$. Then by
Proposition \ref{Prop:LinearRelation}, we have $\underline{c_1}=\underline{0}$, i.e.,
$[\underline{a}]_{\mod p^2}=[\underline{b}]_{\lmod p^2}$. If we have proved that
$[\underline{a}]_{\lmod p^m}=[\underline{b}]_{\lmod p^m}$ for some $m<e$, then we have
$\underline{c_{m}}=\underline{a_m}-\underline{b_{m}} \in G(f(x), p)$
and $c_m(t)=0$ whenever $\alpha(t)=k$. Then $\underline{c_m}=\underline{0}$ and
$[\underline{a}]_{\lmod p^{m+1}}=[\underline{b}]_{\lmod p^{m+1}}$. By induction, we can finally prove
that $[\underline{a}]_{\lmod p^e}=[\underline{b}]_{\lmod p^e}$, i.e., $\underline{a}=\underline{b}$.
\end{proof}

\subsection{Case $2\le \deg g\le p-1$}
In this subsection, we will prove the result for $2\le \deg g\le p-1$. We prove the following
lemmas first.

\begin{Lem} \label{Lem:relation}
Let $f(x)$ be a primitive polynomial of degree $n$ over $\mathbb{Z}/(p^e)$ and $\underline{c}\in G(f(x), p^e)$.
Assume $\underline{\gamma}\in G'(f(x), p)$. For $k\in (\mathbb{Z}/(p))^*$,
the set $\{c_{e-1}(t)\mid \gamma(t)=k\}$ runs over all elements in $\mathbb{Z}/(p)$ or is
a singleton. Moreover, the latter case happens only if
$\underline{c_0}=\cdots=\underline{c_{e-2}}=\underline{0}$
and $\underline{c_{e-1}}=\lambda\cdot \underline{\gamma}$, and the singleton is $\{\lambda\cdot k\}$.
\end{Lem}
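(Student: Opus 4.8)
The plan is to read off the dichotomy from the $p$-adic level structure of $\underline{c}$, distinguishing whether or not all of its levels below the top one vanish. Throughout I use that $f(x)$ reduced modulo $p$ is a primitive, hence irreducible, polynomial over $\mathbb{Z}/(p)$, so that Proposition~\ref{Prop:LinearRelation} applies to pairs of sequences in $G(f(x),p)$, and that $\per(\underline{\gamma})=T$.

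First suppose $\underline{c_0}=\cdots=\underline{c_{e-2}}=\underline{0}$ (this includes $\underline{c}=\underline{0}$). Then $\underline{c}=p^{e-1}\underline{c_{e-1}}$, and since the value of $p^{e-1}u$ modulo $p^e$ depends only on $u$ modulo $p$, the relation $f(x)\underline{c}=\underline{0}$ forces $f(x)\underline{c_{e-1}}\equiv\underline{0}\pmod p$, i.e., $\underline{c_{e-1}}\in G(f(x),p)$. Applying Proposition~\ref{Prop:LinearRelation} to $\underline{c_{e-1}}$ and $\underline{\gamma}\in G'(f(x),p)$ gives the dichotomy at once: either the two sequences are linearly independent and $\{c_{e-1}(t)\mid\gamma(t)=k\}=\mathbb{Z}/(p)$, or $\underline{c_{e-1}}=\lambda\underline{\gamma}$ for some $\lambda\in\mathbb{Z}/(p)$ and $\{c_{e-1}(t)\mid\gamma(t)=k\}=\{\lambda k\}$, which establishes the ``moreover'' claim in this case.

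Now suppose instead that $\underline{c_i}\neq\underline{0}$ for some $i\le e-2$, and fix the smallest such $i$. Then $\underline{c}=p^i(\underline{c_i}+p\,\underline{c_{i+1}}+\cdots+p^{e-1-i}\underline{c_{e-1}})$, so $\underline{c_i}$ is the lowest nonzero level of $\underline{c}$, hence $\underline{c_i}\in G(f(x),p)$; being nonzero it is an m-sequence, and because $h_f(x)$ is nonzero of degree $<n$ modulo the irreducible $f(x)\bmod p$, the sequence $\underline{\alpha'}:=[h_f(x)\underline{c_i}]_{\lmod p}$ is again an m-sequence, so $\underline{\alpha'}\in G'(f(x),p)$. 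A short computation — taking the instance of \eqref{eq:period} with exponent $p^{e-2-i}T$, raising it to the $j$th power, and observing that, once multiplied by the $p^i$-divisible sequence $\underline{c}$ and since $i\le e-2$, only the linear term survives modulo $p^e$; equivalently, Proposition~\ref{Prop:recurring}(1) applied to the primitive sequence $p^{-i}\underline{c}$ over $\mathbb{Z}/(p^{e-i})$ — yields
$$c_{e-1}(t+j\cdot p^{e-2-i}T)-c_{e-1}(t)\equiv j\cdot\alpha'(t)\pmod p,\qquad j\ge 0.$$
I claim this forces $\{c_{e-1}(t)\mid\gamma(t)=k\}=\mathbb{Z}/(p)$. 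By Proposition~\ref{Prop:LinearRelation} applied to $\underline{\alpha'},\underline{\gamma}\in G'(f(x),p)$ one can choose $t$ with $\gamma(t)=k$ and $\alpha'(t)\neq 0$: if $\underline{\alpha'}$ and $\underline{\gamma}$ are independent then $\{\alpha'(t)\mid\gamma(t)=k\}=\mathbb{Z}/(p)$ contains a nonzero value, and if $\underline{\alpha'}=\mu\underline{\gamma}$ then $\mu\neq 0$ and $\alpha'(t)=\mu k\neq 0$ for every such $t$. Fixing such a $t$ and putting $t_j=t+j\cdot p^{e-2-i}T$ for $0\le j\le p-1$, we have $\gamma(t_j)=k$ because $T=\per(\underline{\gamma})$ divides $p^{e-2-i}T$, while $c_{e-1}(t_j)\equiv c_{e-1}(t)+j\,\alpha'(t)\pmod p$ runs over all of $\mathbb{Z}/(p)$ as $j$ does; hence $\{c_{e-1}(t)\mid\gamma(t)=k\}=\mathbb{Z}/(p)$.

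Combining the two cases, the set $\{c_{e-1}(t)\mid\gamma(t)=k\}$ is always $\mathbb{Z}/(p)$ or a singleton; and since $p$ is odd (so $|\mathbb{Z}/(p)|\ge 3$) a singleton can occur only in the first case, i.e., only when $\underline{c_0}=\cdots=\underline{c_{e-2}}=\underline{0}$, and then the first case furnishes $\underline{c_{e-1}}=\lambda\underline{\gamma}$ with the singleton equal to $\{\lambda k\}$. I expect the main obstacle to lie in the second case: one must handle the carries correctly when extracting the recurrence for $c_{e-1}$ at the shifts $j\cdot p^{e-2-i}T$ (equivalently, track which level of $p^{-i}\underline{c}$ is its top level), and one must confirm that $[h_f(x)\underline{c_i}]_{\lmod p}$ is genuinely an m-sequence so that Proposition~\ref{Prop:LinearRelation} can be applied a second time. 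The remaining manipulations with $p$-adic expansions are routine.
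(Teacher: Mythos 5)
Your proof is correct and follows essentially the same route as the paper: split on whether all levels below the top one vanish, apply Proposition~\ref{Prop:LinearRelation} directly to $\underline{c_{e-1}}$ and $\underline{\gamma}$ in the degenerate case, and otherwise use Proposition~\ref{Prop:recurring}(1) for the primitive sequence $p^{-i}\underline{c}$ over $\mathbb{Z}/(p^{e-i})$ together with Proposition~\ref{Prop:LinearRelation} to locate a $t$ with $\gamma(t)=k$ and $(h_f(x)c_i)(t)\neq 0$, along whose shifts $c_{e-1}$ sweeps out all of $\mathbb{Z}/(p)$. The details you add (that $[h_f(x)\underline{c_i}]_{\lmod p}$ is again an m-sequence, and that $\underline{\gamma}$ is constant along the shifts by $p^{e-2-i}T$) are exactly the points the paper leaves implicit.
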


\begin{proof}
If $\underline{c_0}=\cdots=\underline{c_{j-1}}=\underline{0}$ and $\underline{c_{j}}\neq \underline{0}$
with $0 \le j \le e-2$, then $\underline{c}=p^j\underline{c'}$ and $\underline{c'} \in G'(f(x), p^{e-j})$. Since
$\underline{c_0'} \in G'(f(x),p)$, let $h_{f}(x)$ be defined as in Definition \ref{Def:StronglyPrimitive} of last section,
then $h_f(x)\underline{c'_{0}}\in G'(f(x), p)$. By Proposition \ref{Prop:LinearRelation},
for some $t$ with $\gamma(t)=k \neq 0$, we have $(h_{f}(x)c'_0)(t)\neq 0$.
Replacing $e$ by $e-j$ in equality \eqref{eq:e ge 2}, then $c'_{e-j-1}(t)$ can be any element in $\mathbb{Z}/(p)$.
As $\underline{c_{e-1}}=\underline{c'_{e-j-1}}$, then $\underline{c_{e-1}}$ can be any element in $\mathbb{Z}/(p)$.

Assume $\underline{c_0}=\cdots=\underline{c_{e-2}}=\underline{0}$. Then $\underline{c_{e-1}}\in G(f(x), p)$.
By Proposition \ref{Prop:LinearRelation},
if $\underline{c_{e-1}}$ and $\underline{\gamma}$ are linearly independent, then
$\underline{c_{e-1}}$ can be any element in $\mathbb{Z}/(p)$. If $\underline{c_{e-1}}$ and
$\underline{\gamma}$ are linearly dependent with $\underline{c_{e-1}}=\lambda\cdot\underline{\gamma}$,
then it is obvious that the set
$\{\underline{c_{e-1}(t)}\mid \gamma(t)=k\}$ is the singleton $\{\lambda\cdot k\}$.
\end{proof}

\begin{Lem} \label{Lem:highestlevel}
Let $f(x)$ be a strongly primitive polynomial of degree $n$ over $\mathbb{Z}/(p^e)$ with odd prime $p$
and $e\ge 2$. Assume $\underline{a},\ \underline{b}\in G'(f(x), p^e)$,
$\underline{\alpha}=[h_{f}(x)\underline{a_0}]_{\lmod p}$ and
$\underline{\beta}=[h_{f}(x)\underline{b_0}]_{\lmod p}$. Suppose $\underline{\beta}=\lambda\cdot\underline{\alpha}$
holds for some $\lambda\in \{1, 2, \ldots, p-1\}$. Let $k\in (\mathbb{Z}/(p))^*$.
If for those $t$ with $\alpha(t)=k$, the equality
$$b_{e-1}(t)=\delta+\lambda\cdot a_{e-1}(t)$$
always holds, then $\lambda=1$, $[\underline{a}]_{\lmod p^{e-1}}=[\underline{b}]_{\lmod p^{e-1}}$ and the
sequence $\underline{b_{e-1}}-\underline{a_{e-1}}=\delta k^{-1}\cdot \underline{\alpha}$.
\end{Lem}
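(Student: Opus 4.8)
The plan is to imitate the descent structure of the proof of Theorem~\ref{Thm:deg=1}, but now exploiting the hypothesis $b_{e-1}(t)=\delta+\lambda\cdot a_{e-1}(t)$ on the set $S_k=\{t\mid\alpha(t)=k\}$ together with Proposition~\ref{Prop:recurring} to force $\lambda=1$ and then to descend through the levels. First I would observe that $S_k$ is closed under translation by $p^{e-2}T$, and more generally that $\alpha(t+j\cdot p^{e-2}T)=\alpha(t)=k$ and $\beta(t+j\cdot p^{e-2}T)=\beta(t)=\lambda k$ for all $j\ge 0$. Since $p^{e-2}T$ is a period of every lower level $\underline{a_i},\underline{b_i}$ with $i<e-1$, substituting $t\mapsto t+j\cdot p^{e-2}T$ in $b_{e-1}(t)=\delta+\lambda a_{e-1}(t)$ and subtracting gives, via \eqref{eq:e ge 2}, the identity $j\cdot\beta(t)=\lambda\cdot j\cdot\alpha(t)$ for $t\in S_k$, which is automatically consistent with $\underline{\beta}=\lambda\underline{\alpha}$ and yields no new information at this step; the real leverage comes one level down.

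Next I would apply \eqref{eq:e ge 4} (for $e\ge 4$) and \eqref{eq:e=3} (for $e=3$), and for $e=2$ handle things directly. Translating by $j\cdot p^{e-3}T$ and using that $\underline{\beta}=\lambda\underline{\alpha}$, $\alpha(t)=k$, $\beta(t)=\lambda k$, I would expand $b_{e-1}(t+j p^{e-3}T)-b_{e-1}(t)$ and $a_{e-1}(t+j p^{e-3}T)-a_{e-1}(t)$ using Proposition~\ref{Prop:recurring}, then impose the relation $b_{e-1}=\delta+\lambda a_{e-1}$ on both $t$ and $t+jp^{e-3}T$. This produces an identity, for each fixed $t\in S_k$, between two functions of $j$ (equivalently of $x=[jk]_{\lmod p}$, which ranges over all of $\mathbb{Z}/(p)$ since $k\neq0$). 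The left-hand side will be an affine term in $x$ plus a carry term $C_1(a_{e-2}(t)+x)$ (and $C_1(b_{e-2}(t)+x)$), whose $x^{p-1}$-coefficient is computed by Proposition~\ref{Prop:carry}; the scaling by $\lambda$ on the $\underline{a}$-side will introduce a factor whose interplay with the carry coefficient $-a_{e-2}(t)$ versus $-b_{e-2}(t)$ is what pins down $\lambda$. Specifically, comparing the coefficient of $x^{p-1}$ on both sides, and using $\beta=\lambda\alpha$ to control $h_f(x)\underline{b_0}$ versus $h_f(x)\underline{a_0}$ and $h_{e-2}(x)\underline{b_0}$ versus $h_{e-2}(x)\underline{a_0}$, I expect to arrive at an equation forcing $\lambda=1$ (the asymmetry being that $C_1$ is not linear, so $\lambda C_1(\cdot)\neq C_1(\lambda\,\cdot)$ in general, and the $x^{p-1}$ coefficients cannot match unless $\lambda=1$).

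Once $\lambda=1$, the hypothesis reads $b_{e-1}(t)=\delta+a_{e-1}(t)$ on $S_k$, and $\underline{\beta}=\underline{\alpha}$, hence $\underline{a_0}=\underline{b_0}$ by Proposition~\ref{Prop:LinearRelation} (both are in $G'(f(x),p)$ and agree on the support of $\alpha$). Now I would run the same descent as in Theorem~\ref{Thm:deg=1}: from \eqref{eq:e ge 4}/\eqref{eq:e=3} with $\underline{a_0}=\underline{b_0}$ and $\alpha=\beta$, I extract that $a_{e-2}(t)+\eta$-type combinations of the two sequences agree on $S_k$ at level $e-2$, and more to the point I get that the difference sequence $\underline{c}=\underline{a}-\underline{b}$ has $\underline{c_i}\in G(f(x),p)$ vanishing on $S_k$, hence $\underline{c_i}=\underline{0}$, for $i=1,\dots,e-2$ by induction using Proposition~\ref{Prop:LinearRelation}. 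This gives $[\underline{a}]_{\lmod p^{e-1}}=[\underline{b}]_{\lmod p^{e-1}}$. Finally, $\underline{c_{e-1}}=\underline{b_{e-1}}-\underline{a_{e-1}}\in G(f(x),p)$ satisfies $c_{e-1}(t)=\delta$ (a constant) for all $t\in S_k=\{t\mid\alpha(t)=k\}$; by Proposition~\ref{Prop:LinearRelation} applied to $\underline{c_{e-1}}$ and $\underline{\alpha}$, since $\underline{c_{e-1}}$ is constant on a nonempty coset of the support of $\alpha$, $\underline{c_{e-1}}$ and $\underline{\alpha}$ must be linearly dependent, say $\underline{c_{e-1}}=\mu\cdot\underline{\alpha}$; evaluating at $t\in S_k$ gives $\mu k=\delta$, so $\mu=\delta k^{-1}$ and $\underline{b_{e-1}}-\underline{a_{e-1}}=\delta k^{-1}\cdot\underline{\alpha}$, as claimed. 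The main obstacle I anticipate is the $\lambda=1$ step: carefully tracking how the factor $\lambda$ propagates through the nonlinear carry term $C_1$ and the products $h_f(x)\underline{b_1}$, $h_{e-2}(x)\underline{b_0}$ (which are only controlled through $\underline{b_0}=$ possibly-not-yet-known and $\underline{\beta}=\lambda\underline{\alpha}$) to produce a clean contradiction unless $\lambda=1$, possibly requiring a separate small-$e$ ($e=2,3$) bookkeeping and an appeal to strong primitivity (so that $h_f$ is nonconstant and $h_f(x)\underline{c_0'}\in G'(f(x),p)$ in the style of Lemma~\ref{Lem:relation}).
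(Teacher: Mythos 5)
Your overall strategy diverges from the paper's, and the divergence hides a genuine gap at the single most important step: forcing $\lambda=1$. You propose to translate by $j\cdot p^{e-3}T$, expand via \eqref{eq:e ge 4}/\eqref{eq:e=3}, and compare coefficients of $x^{p-1}$ using Proposition \ref{Prop:carry}, asserting that the nonlinearity of $C_1$ ``cannot match unless $\lambda=1$.'' But this is never demonstrated, and it is not clear it works. In the descent of Theorem \ref{Thm:deg=1} the terms $C_{1}(j\cdot (h_{e-2}(x)a_0)(t))$ and $C_{1}(j\cdot (h_{e-2}(x)b_0)(t))$ cancel because one already has $\underline{a_0}=\underline{b_0}$; here you only know $\underline{b_0}=\lambda\cdot\underline{a_0}$, so these nonlinear terms survive, and the $x^{p-1}$-coefficient identity you would obtain merely \emph{relates} $a_{e-2}(t)$ and $b_{e-2}(t)$ (with extra contributions from $C_1(j\cdot v)$ terms) rather than constraining $\lambda$ itself. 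You acknowledge this is ``the main obstacle,'' which is to say the lemma is not proved. In addition, the case $e=2$ is waved off as ``handle things directly,'' but equations \eqref{eq:e ge 4} and \eqref{eq:e=3} simply do not exist for $e=2$, so your main mechanism is unavailable exactly where the lemma must still hold.

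The paper's argument is entirely different and much more elementary: it is a carry-counting argument built on Lemma \ref{Lem:relation}. For $1\le\lambda<p-1$ one sets $\underline{c}=\underline{b}-\lambda\cdot\underline{a}\in G(f(x),p^e)$ and writes $c_{e-1}(t)=\delta-u(t)$, where $u(t)$ is the borrow from the lower $e-1$ digits and satisfies $0\le u(t)\le\lambda$. Hence on $\{t\mid\alpha(t)=k\}$ the top level of $\underline{c}$ takes at most $\lambda+1<p$ values, so by Lemma \ref{Lem:relation} it must be a single value and $\underline{c_0}=\cdots=\underline{c_{e-2}}=\underline{0}$. Strong primitivity guarantees that $a(t)$ sweeps all of $\{0,\ldots,p^e-1\}$ on this set, so for $\lambda\ge2$ one exhibits both $u(t)=0$ and $u(t)>0$, a contradiction; the case $\lambda=p-1$ is handled separately with $\underline{c}=\underline{a}+\underline{b}$. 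This leaves $\lambda=1$, $u\equiv0$ (hence $[\underline{a}]_{\lmod p^{e-1}}=[\underline{b}]_{\lmod p^{e-1}}$), and $c_{e-1}(t)=\delta$ on the set, which yields $\underline{b_{e-1}}-\underline{a_{e-1}}=\delta k^{-1}\cdot\underline{\alpha}$ by Proposition \ref{Prop:LinearRelation} exactly as in your final paragraph (that closing step of yours is fine). If you want to salvage your approach, you would need to actually carry out the $x^{p-1}$-coefficient computation with the uncancelled $C_1(j\cdot(h_{e-2}(x)a_0)(t))$ and $C_1(j\cdot(h_{e-2}(x)b_0)(t))$ terms and treat $e=2$ separately; the counting argument via Lemma \ref{Lem:relation} avoids all of this.
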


\begin{proof}
As $f(x)$ is a strongly primitive polynomial, $\underline{\alpha}$ and $\underline{a_0}$ are linearly independent.
By Proposition \ref{Prop:LinearRelation}, when $\alpha(t)=k$, $a_0(t)$ can be any element in $\{0, 1, \ldots, p-1\}$.
Applying the equality \eqref{eq:e ge 2} for $\underline{a_1}, \ldots, \underline{a_{e-1}}$,
then when $\alpha(t)=k$, $a(t)$ can be any element in $\{0, 1, \ldots, p^e-1\}$.

If $1\le \lambda <p-1$, let $\underline{c}=\underline{b}-\lambda\cdot\underline{a}\in G(f(x), p^e)$. Then
$$c_{e-1}(t)=b_{e-1}(t)-\lambda\cdot a_{e-1}(t)-u(t)=\delta-u(t)$$
with $u(t)$ satisfying
$$[b(t)]_{\lmod p^{e-1}}+(u(t)-1)p^{e-1}<\lambda\cdot [a(t)]_{\lmod p^{e-1}}\le [b(t)]_{\lmod p^{e-1}}+u(t)p^{e-1}.$$
Since $0 \le [a(t)]_{\lmod p^{e-1}}, [b(t)]_{\lmod p^{e-1}} <p^{e-1}$, we have $0 \le u(t) \le \lambda$.
Since $u(t)\le \lambda<p-1$, $c_{e-1}(t)=\delta-u(t)$ can not be all elements in
$\{0, 1, \ldots, p-1\}$. By Lemma \ref{Lem:relation}, $c_{e-1}(t)$ must be a constant when $\alpha(t)=k$.
We choose $t$ with $a(t)=0$, then $u(t)=0$. When $\lambda \ge 2$, we can also choose $t$ with
$\lambda\cdot [a(t)]_{\lmod p^{e-1}}>p^{e-1}$, then $u(t)>0$. So $c_{e-1}(t)$ is not a constant,
which is contradiction to Lemma \ref{Lem:relation}.
If $\lambda=1$, then $u(t)$ can only be $0$ or $1$ and $0$ is reachable when
$[a(t)]_{\lmod p^{e-1}}=0$. By Lemma \ref{Lem:relation}, we
have $u(t)$ can only be $0$, and then $\underline{c_0}=\cdots=\underline{c_{e-2}}=\underline{0}$ and
$c_{e-1}(t)=\delta$ when $\alpha(t)=k$.

If $\lambda=p-1$, let $\underline{c}=\underline{a}+\underline{b} \in G(f(x), p^e)$. Then
$$c_{e-1}(t)=b_{e-1}(t)+a_{e-1}(t)+u(t)=\delta+u(t)$$
with $u(t)$ satisfying
\begin{align*}
u(t)=\begin{cases}
0 & [a(t)]_{\lmod p^{e-1}}+[b(t)]_{\lmod p^{e-1}}<p^{e-1},\\
1 & \text{otherwise}.
\end{cases}
\end{align*}
First we choose $[a(t)]_{\lmod p^{e-1}}=0$. As $[b(t)]_{\lmod p^{e-1}}<p^{e-1}$, we have $u(t)=0$.
Then we choose $[a(t)]_{\lmod p^{e-1}}=p^{e-1}-1$, as $\underline{\beta}=\lambda\cdot\underline{\alpha}$
with $\lambda\neq 0$, we have $\underline{b_0}=\lambda\cdot\underline{a_0}$. Then $b_0(t) \neq 0$ and
$[a(t)]_{\lmod p^{e-1}}+[b(t)]_{\lmod p^{e-1}} \ge p^{e-1}$. Thus $u(t)=1$ and $c_{e-1}(t)$ can only
choose $2$ elements $\{\delta, \delta+1\}$, which is contradiction to Lemma \ref{Lem:relation}.

From the above discussion, we have $\lambda=1$, and then $\underline{c}=\underline{a}-\underline{b}$ and
$[\underline{c}]_{p^{e-1}}=\underline{0}$, which means
$[\underline{a}]_{\lmod p^{e-1}}=[\underline{b}]_{\lmod p^{e-1}}$.
The statement $c_{e-1}(t)=\delta$ when $\alpha(t)=k$ means
$\underline{b_{e-1}}-\underline{a_{e-1}}=\delta k^{-1}\cdot \underline{\alpha}$
by Proposition \ref{Prop:LinearRelation}.
The proof is complete.
\end{proof}

\begin{Thm}\label{Thm:degge2}
Let $e \ge 2$ and $f(x)$ be a strongly primitive polynomial of degree $n$ over $\mathbb{Z}/(p^e)$.
Assume $\underline{a},\ \underline{b}\in G'(f(x), p^e)$ and
$$\phi(x_0, x_1, \ldots, x_{e-1})=g(x_{e-1})+\eta_{e-2}(x_0, x_1, \ldots, x_{e-2}),$$
where $2 \le \deg g \le p-1$ and $\eta_{e-2}$ is an $(e-1)$-variable polynomial over $\mathbb{Z}/(p)$. Then
$\underline{a}=\underline{b}$ if and only if there exists some $k \in (\mathbb{Z}/(p))^*$ such that
the compressing sequences
$\phi(a_{0}(t), a_1(t),\ldots, a_{e-1}(t))=
\phi(b_{0}(t), b_1(t), \ldots, b_{e-1}(t))$ at $t$ with
$\alpha(t)=k$.
\end{Thm}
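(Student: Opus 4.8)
The plan is to prove only the ``if'' direction, the converse being immediate since $\underline a=\underline b$ makes the compressing sequences identical. Suppose $k\in(\mathbb{Z}/(p))^{*}$ is such that
\[ g\big(a_{e-1}(t)\big)+\eta_{e-2}\big(a_0(t),\ldots,a_{e-2}(t)\big)=g\big(b_{e-1}(t)\big)+\eta_{e-2}\big(b_0(t),\ldots,b_{e-2}(t)\big) \]
for every $t$ with $\alpha(t)=k$; put $T=p^{n}-1$ and $\underline{\beta}=[h_f(x)\underline{b_0}]_{\lmod p}$. For each $j\ge 0$ we have $\alpha(t+jp^{e-2}T)=\alpha(t)=k$, and by Proposition \ref{Prop:Period} the number $p^{e-2}T$ is a period of every level sequence $\underline{a_i},\underline{b_i}$ with $i\le e-2$, so the two $\eta_{e-2}$-terms are unchanged when $t$ is replaced by $t+jp^{e-2}T$. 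Using \eqref{eq:e ge 2} in the forms $a_{e-1}(t+jp^{e-2}T)=a_{e-1}(t)+jk$ and $b_{e-1}(t+jp^{e-2}T)=b_{e-1}(t)+j\beta(t)$, I would subtract the two instances of the displayed equation to obtain, for all $j\ge 0$,
\[ g\big(a_{e-1}(t)+jk\big)-g\big(b_{e-1}(t)+j\beta(t)\big)=g\big(a_{e-1}(t)\big)-g\big(b_{e-1}(t)\big). \]
As $j$ ranges over $\{0,1,\ldots,p-1\}$ so does $x:=[jk]_{\lmod p}$, and $j\beta(t)=(\beta(t)k^{-1})x$; since $\deg g\le p-1$, for each fixed such $t$ both sides are polynomials in $x$ of degree $<p$ agreeing on all of $\mathbb{Z}/(p)$, hence coincide as polynomials.

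Write $g(x)=\sum_{i=0}^{d}g_i x^{i}$ with $d=\deg g$, $g_d\neq 0$ and $2\le d\le p-1$. Comparing the coefficients of $x^{d}$ in the last identity forces $\big(\beta(t)k^{-1}\big)^{d}=1$, so $\beta(t)\neq 0$ whenever $\alpha(t)=k$; by Proposition \ref{Prop:LinearRelation} this is impossible unless $\underline{\beta}$ and $\underline{\alpha}$ are linearly dependent, say $\underline{\beta}=\lambda\underline{\alpha}$ with $\lambda\in\{1,\ldots,p-1\}$, in which case $\beta(t)=\lambda k$ and $\lambda^{d}=1$. The identity now reads $g(a_{e-1}(t)+x)-g(b_{e-1}(t)+\lambda x)=g(a_{e-1}(t))-g(b_{e-1}(t))$, and comparing coefficients of $x^{d-1}$ (legitimate because $d-1\ge 1$, so the right side contributes nothing, and $\lambda^{d-1}=\lambda^{-1}$) gives $g_{d-1}+d\,g_d\,a_{e-1}(t)=\lambda^{-1}\big(g_{d-1}+d\,g_d\,b_{e-1}(t)\big)$. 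Since $1\le d\le p-1$ makes $d\neq 0$ in $\mathbb{Z}/(p)$ and $g_d\neq 0$, I can solve for $b_{e-1}(t)$ and obtain
\[ b_{e-1}(t)=\delta+\lambda\,a_{e-1}(t),\qquad \delta:=\frac{(\lambda-1)\,g_{d-1}}{d\,g_d}, \]
for every $t$ with $\alpha(t)=k$, with $\delta$ a fixed constant.

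At this point the hypotheses of Lemma \ref{Lem:highestlevel} hold verbatim ($f$ strongly primitive, $\underline{\beta}=\lambda\underline{\alpha}$, and $b_{e-1}(t)=\delta+\lambda a_{e-1}(t)$ whenever $\alpha(t)=k$), so the lemma gives $\lambda=1$, $[\underline a]_{\lmod p^{e-1}}=[\underline b]_{\lmod p^{e-1}}$, and $\underline{b_{e-1}}-\underline{a_{e-1}}=\delta k^{-1}\underline{\alpha}$. But $\lambda=1$ makes $\delta=0$ by the formula above, so $\underline{a_{e-1}}=\underline{b_{e-1}}$, and together with $[\underline a]_{\lmod p^{e-1}}=[\underline b]_{\lmod p^{e-1}}$ this yields $\underline a=\underline b$. (Equivalently, once $a_i=b_i$ for $i\le e-2$ the original equation collapses to $g(a_{e-1}(t))=g(a_{e-1}(t)+\delta)$ on $\{t:\alpha(t)=k\}$, and strong primitivity makes $\underline{\alpha},\underline{a_0}$ linearly independent so that $a_{e-1}(t)$ runs over all of $\mathbb{Z}/(p)$ there, forcing $g(y)=g(y+\delta)$ identically and hence $\delta=0$ because $\deg g<p$.) I expect the main obstacle to be the middle paragraph: extracting the affine relation $b_{e-1}(t)=\delta+\lambda a_{e-1}(t)$ from the functional equation. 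It uses $\deg g\le p-1$ so that coefficients may be read off, and $\deg g\ge 2$ so that the dependence on $a_{e-1}(t)$ already shows up in the $x^{d-1}$-coefficient — precisely the point where the treatment must differ from the $\deg g=1$ case of Theorem \ref{Thm:deg=1}.
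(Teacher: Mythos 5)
Your proof is correct, and the skeleton matches the paper's: derive the functional equation by shifting $t\mapsto t+jp^{e-2}T$, show $\underline{\beta}=\lambda\underline{\alpha}$, establish $b_{e-1}(t)=\delta+\lambda a_{e-1}(t)$ on $\{t:\alpha(t)=k\}$, and finish with Lemma \ref{Lem:highestlevel}. Where you genuinely diverge is the middle step. The paper rules out $\beta(t)=0$ by noting it would force $g(a_{e-1}(t)+jk)-g(a_{e-1}(t))=0$ for all $j$ (making $g$ constant), then fixes the value $\delta:=b_{e-1}(t)$ at times where $a_{e-1}(t)=0$, proves $\delta$ is independent of the choice of such $t$ by showing $g(\delta+x)-g(\delta'+x)$ cannot be constant when $\deg g\ge 2$, and finally propagates the affine relation to all $t_0$ with $\alpha(t_0)=k$ by walking along the orbit $t_0+jp^{e-2}T$. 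You instead note that the functional equation, being an identity between polynomials of degree $<p$ holding at all $p$ points, is a polynomial identity, and read off the $x^{d}$ and $x^{d-1}$ coefficients: the former gives $(\beta(t)k^{-1})^{d}=1$ (hence $\beta(t)\neq 0$ and $\lambda^{d}=1$), the latter gives the affine relation directly for every relevant $t$, with the explicit constant $\delta=(\lambda-1)g_{d-1}/(dg_d)$. This buys you two things: you avoid the well-definedness and orbit-propagation arguments entirely, and the explicit formula makes $\delta=0$ immediate once Lemma \ref{Lem:highestlevel} forces $\lambda=1$, so you do not need the paper's closing argument that $g(x)=g(x+\delta)$ identically. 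Both routes use $2\le\deg g\le p-1$ in the same essential ways ($\deg g\le p-1$ to control the polynomial identity, $\deg g\ge 2$ so that the subleading coefficient sees $a_{e-1}(t)$), and both rely on the same key lemma, so the approaches are comparable in depth; yours is somewhat shorter and more mechanical.
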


\begin{proof}
Suppose $\alpha(t)=k\neq 0$, we have
\begin{align} \label{eq:main2}
g(a_{e-1}(t))+\eta_{e-2}(a_0(t), \ldots, a_{e-2}(t))=g(b_{e-1}(t))+\eta_{e-2}(b_0(t), \ldots, b_{e-2}(t)).
\end{align}
Let $T=p^n-1$. Then for all $j \ge 0$, $\alpha(t+j\cdot p^{e-2}T)=\alpha(t)=k$.
Since $p^{e-2}T$ is a period of $\underline{a_{i}}, \underline{b_j}$ for $i, j< e-1$, then if we replace $t$
by $j\cdot p^{e-2}T+t$ in \eqref{eq:main2} and minus it we have
\begin{align}\label{eq:degge2}
g(a_{e-1}(j\cdot p^{e-2}T+t))-g(a_{e-1}(t))=g(b_{e-1}(j\cdot p^{e-2}T+t))-g(b_{e-1}(t)).
\end{align}
Let $\underline{\beta}=[h_{f}(x)\underline{b_0}]_{\lmod p}$.
Again by \eqref{eq:e ge 2} in Proposition \ref{Prop:recurring}, we have
\begin{align}
a_{e-1}(j\cdot p^{e-2}T+t)=a_{e-1}(t)+j\cdot \alpha(t),\label{eq:a}\\
b_{e-1}(j\cdot p^{e-2}T+t)=b_{e-1}(t)+j\cdot \beta(t). \label{eq:b}
\end{align}
If there exists $t$ such that $\alpha(t)=k$ and $\beta(t)=0$, substituting the above two equalities to
\eqref{eq:degge2} we obtain that
$$g(a_{e-1}(t)+j\cdot k)-g(a_{e-1}(t))=0$$
holds for all $j$. Thus $g$ must be a constant polynomial and it is a contradiction to $2\le \deg g \le p-1$.
So when $\alpha(t)=k$, $\beta(t)$ can not be zero. By Proposition \ref{Prop:LinearRelation}, we have
$\underline{\beta}=\lambda\cdot \underline{\alpha}$ with $\lambda\neq 0$.
By \eqref{eq:a}, we can choose $t$ such that $\alpha(t)=k$ and $a_{e-1}(t)=0$. If for this $t$
we have $b_{e-1}(t)=\delta$, then from \eqref{eq:degge2}, \eqref{eq:a} and \eqref{eq:b}, we have
$$g(j\cdot k)-g(0)=g(\delta+j\cdot\lambda k)-g(\delta).$$
We are going to prove that for $t$ with $\alpha(t)=k$ and $a_{e-1}(t)=0$, we always have $b_{e-1}(t)=\delta$.
If $b_{e-1}(t)=\delta'\neq \delta$, then we also have
$$g(j\cdot k)-g(0)=g(\delta'+j\cdot\lambda k)-g(\delta').$$
From the above two equalities, $$g(\delta+j\cdot\lambda k)-g(\delta)=g(\delta'+j\cdot\lambda k)-g(\delta').$$
As $j\cdot\lambda k$ runs over all elements in $\mathbb{Z}/(p)$, we have
$g(\delta+x)-g(\delta'+x)$ is equal to a constant $g(\delta)-g(\delta')$. Since
$\deg g\ge 2$, it is impossible. Then we have proved that for $t$ with $\alpha(t)=k$ and
$a_{e-1}(t)=0$, $b_{e-1}(t)$ is equal to $\delta$. From $\underline{\beta}=\lambda\cdot \underline{\alpha}$
and equalities \eqref{eq:a} and \eqref{eq:b}, we have for $t$ with $\alpha(t)=k$ and $a_{e-1}(t)=0$,
$$b_{e-1}(j\cdot p^{e-2}T+t)=\delta+\lambda\cdot a_{e-1}(j\cdot p^{e-2}T+t).$$
The above equality holds for all $j\ge 0$. Then for $t_0$ with $\alpha(t_0)=k$, there exists
some $0\le j_0\le p-1$ such that $a_{e-1}(j_0\cdot p^{e-2}T+t_0)=0$. The above equation holds for
$t=j_0\cdot p^{e-2}T+t_0$ and $j=p-j_0$. then
\begin{align*}
b_{e-1}(t_0)&=b_{e-1}(p^{e-1}T+t_0)=b_{e-1}(j\cdot p^{e-2}T+t)\\
&=\delta+\lambda\cdot a_{e-1}(j\cdot p^{e-2}T+t)\\
&=\delta+\lambda\cdot a_{e-1}(p^{e-1}T+t_0)\\
&=\delta+\lambda\cdot a_{e-1}(t_0).
\end{align*}
We have proved that for $t$ with $\alpha(t)=k$, $b_{e-1}(t)=\delta+\lambda\cdot a_{e-1}(t)$ always holds.
By Lemma \ref{Lem:highestlevel}, we have
$$\lambda=1,\quad [\underline{a}]_{\lmod p^{e-1}}=[\underline{b}]_{\lmod p^{e-1}},\quad \underline{b_{e-1}}=\underline{a_{e-1}}+\delta.$$
Now we go back to equality \eqref{eq:main2}. Since when $\alpha(t)=k$, $a_{e-1}(t)$ can be any element in
$\{0, 1, \ldots, p-1\}$, then we have $g(x)=g(x+\delta)$. As $2 \le \deg g\le p-1$, we must have $\delta=0$.
Thus $\underline{b_{e-1}}=\underline{a_{e-1}}$ and we have proved that
$[\underline{a}]_{\lmod p^{e-1}}=[\underline{b}]_{\lmod p^{e-1}}$. So $\underline{a}=\underline{b}$, and the
proof is complete.
\end{proof}

From Theorem \ref{Thm:deg=1} and Theorem \ref{Thm:degge2}, we have proved what we state at the
beginning of this section.

\section{$s$-uniform}
Let $f(x)$ be a strongly primitive polynomial of degree $n$ over $\mathbb{Z}/(p^e)$ with
odd prime $p$ and $e\ge 2$. Assume $\underline{a}, \underline{b}\in G'(f(x), p^e)$. Let
$$\phi(x_0, x_1, \ldots, x_{e-1})=g(x_{e-1})+\eta_{e-2}(x_0, x_1, \ldots, x_{e-2})$$
with $\eta_{e-2}$ an $(e-1)$-variable polynomial over $\mathbb{Z}/(p)$.
In Theorem \ref{Thm:s-uniform}\cite[Theorem 9]{ZhQT}, Zheng, Qi and Tian prove that
when $g(x_{e-1})=x_{e-1}$ and the coefficient of
$x_{e-2}^{p-1}\cdots x_1^{p-1}x_0^{p-1}$ in $\eta_{e-2}$ is not equal to $(-1)^e\cdot\frac{p+1}{2}$,
then $\underline{a}=\underline{b}$ if and only if there exist
$s\in \mathbb{Z}/(p)$ and $k\in (\mathbb{Z}/(p))^{*}$ such that
$\phi(\underline{a_{0}}, \underline{a_1},\ldots, \underline{a_{e-1}})$ and
$\phi(\underline{b_{0}}, \underline{b_1},\ldots, \underline{b_{e-1}})$ are $s$-uniform with $\underline{\alpha}|_{k}$.
In this section, we will discuss the $s$-uniform property for general $g(x_{e-1})$.

If the image of polynomial $\phi$ is not $\mathbb{Z}/(p)$, i.e., there is some $s$ in $\mathbb{Z}/(p)$ such
that $\phi(x_0, x_1,\ldots, x_{e-1})\neq s$ for all $e$-tuples in $(\mathbb{Z}/(p))^e$ , then it is obvious
that for any $\underline{a}$ and $\underline{b}$, $\phi(a_{0}(t), a_1(t),\ldots, a_{e-1}(t))=s$
if and only if $\phi(b_{0}(t), b_1(t), \ldots, b_{e-1}(t))=s$.
Thus we only consider the case that $s$ is an image of $\phi$. Since
when $\alpha(t)=k$, $a(t)$ can be any element in $\{0,1, \ldots, p^e-1\}$ by the proof of Lemma \ref{Lem:highestlevel},
$s$ is an image of $\phi$ if and only if there is some $t$ with $\alpha(t)=k$ such that
$\phi(a_0(t), a_1(t), \ldots, a_{e-1}(t))=s$.

We first consider the case that $g(x_{e-1})$ is a permutation polynomial. We define a function
$\psi_{z, w}$ from $(\mathbb{Z}/(p))^{e-1}$ to $\mathbb{Z}/(p)$ by
\begin{align*}
\psi_{z, w}(x_0, \ldots, x_{e-2})=\begin{cases}
z, & \text{if } (x_0, \ldots, x_{e-2})=(0, \ldots, 0),\\
w, & \text{if } (x_0, \ldots, x_{e-2})\neq (0, \ldots, 0).
\end{cases}
\end{align*}
The polynomial representation of $\psi_{z, w}$ is
$$\psi_{z, w}(x_0, \ldots, x_{e-2})=(z-w)(1-x_0^{p-1})\cdots(1-x_{e-2}^{p-1})+w.$$

We have the following theorem.
\begin{Thm}
Let $\underline{a},\underline{b}\in G'(f(x), p^e)$ and $\underline{b}=-\underline{a}$.
Assume
$$\phi(x_0, \ldots, x_{e-1})=g(x_{e-1})+\psi_{z, w}(x_0, \ldots, x_{e-2})$$
with $g(x_{e-1})$ a permutation polynomial.
Then, for any $s$ in $\mathbb{Z}/(p)$, we can choose suitable $z$ and $w$ such that
$\phi(\underline{a_{0}}, \underline{a_1},\ldots, \underline{a_{e-1}})$ and
$\phi(\underline{b_{0}}, \underline{b_1},\ldots, \underline{b_{e-1}})$ are $s$-uniform.
\end{Thm}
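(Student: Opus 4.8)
The plan is to exploit the fact that $\underline b = -\underline a$ forces a tight relationship between the highest-level sequences $\underline{b_{e-1}}$ and $\underline{a_{e-1}}$, and between the lower levels, then to choose $z$ and $w$ so that the two compressing sequences hit the value $s$ at exactly the same times. First I would observe that because $g$ is a permutation polynomial, we may (after a harmless reparametrization of $s$) reduce to understanding when $g(x_{e-1}) + \psi_{z,w}(x_0,\dots,x_{e-2})$ equals a given target, split according to whether $(x_0,\dots,x_{e-2})=(0,\dots,0)$ or not: on the ``zero block'' the value is $g(x_{e-1})+z$, and off it the value is $g(x_{e-1})+w$. Since $g$ is a permutation, for each of $z$ and $w$ there is a unique $x_{e-1}$-value solving the corresponding equation.

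Next I would work out, from $\underline b = -\underline a$, how $b_{e-1}(t)$ and $a_{e-1}(t)$ are related, and when the lower-level block $(b_0(t),\dots,b_{e-2}(t))$ is zero versus when $(a_0(t),\dots,a_{e-2}(t))$ is zero. The key point is that $[\underline b]_{\lmod p^{e-1}} = -[\underline a]_{\lmod p^{e-1}}$, so the lower block of $\underline b$ is zero at time $t$ iff $[a(t)]_{\lmod p^{e-1}}=0$ iff the lower block of $\underline a$ is zero at $t$; hence the ``zero-block set'' $Z=\{t : (a_0(t),\dots,a_{e-2}(t))=(0,\dots,0)\}$ is the same for $\underline a$ and $\underline b$. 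On $Z$ we have $b_{e-1}(t) = -a_{e-1}(t)$ (no carry, since the lower part is $0$), while off $Z$ we have $b_{e-1}(t) = -a_{e-1}(t)-1$ (a carry of $1$ is produced). So the relation between $b_{e-1}$ and $a_{e-1}$ is known explicitly and is \emph{affine in the two regimes}, with a carry term that depends only on membership in $Z$.

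Then I would set up the $s$-uniformity requirement as a condition on $z,w$. For $t\in Z$: $\phi(\underline a)(t)=s$ iff $g(a_{e-1}(t))+z=s$, i.e.\ $a_{e-1}(t)=g^{-1}(s-z)$; and $\phi(\underline b)(t)=s$ iff $g(b_{e-1}(t))+z = g(-a_{e-1}(t))+z=s$, i.e.\ $a_{e-1}(t)=-g^{-1}(s-z)$. For these to define the same set of $t$ on $Z$ (using that, by Proposition~\ref{Prop:LinearRelation} / Lemma~\ref{Lem:relation}, $a_{e-1}(t)$ ranges over all of $\mathbb{Z}/(p)$ as $t$ ranges suitably, or that at least the constraint must hold identically), we need $g^{-1}(s-z) = -g^{-1}(s-z)$, i.e.\ $g^{-1}(s-z)=0$, i.e.\ $z = s - g(0)$. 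Similarly, off $Z$ we need $g(a_{e-1}(t))+w=s$ and $g(-a_{e-1}(t)-1)+w=s$ to define the same set; this forces $g^{-1}(s-w) = -g^{-1}(s-w)-1$, i.e.\ $2\,g^{-1}(s-w) = -1$, i.e.\ $g^{-1}(s-w) = -\tfrac{1}{2} = \tfrac{p-1}{2}$, so $w = s - g\!\left(\tfrac{p-1}{2}\right)$ (here $\tfrac12$ makes sense since $p$ is odd). With $z = s-g(0)$ and $w = s-g\!\left(\tfrac{p-1}{2}\right)$ chosen, I would verify that for \emph{every} $t$, $\phi(\underline a)(t)=s \iff \phi(\underline b)(t)=s$, by checking the two cases $t\in Z$ and $t\notin Z$ separately and using the carry analysis above.

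The main obstacle I anticipate is the careful bookkeeping of the carry when forming $-\underline a$: I must confirm that the carry into level $e-1$ is exactly $0$ on $Z$ and exactly $1$ off $Z$ (this uses that $-[a(t)]_{\lmod p^{e-1}} = p^{e-1} - [a(t)]_{\lmod p^{e-1}}$ when the lower part is nonzero, producing a borrow/carry of $1$, and is $0$ when the lower part vanishes), and that there is no subtlety when $a(t)=0$ entirely. A secondary point to get right is the justification that the constraints above must hold \emph{identically} in $t$ rather than merely on the realized set — this follows because on $Z$ (resp.\ off $Z$) the value $a_{e-1}(t)$ is not pinned down (e.g.\ one can find $t\in Z$ with $a_{e-1}(t)$ taking at least two distinct values, by Proposition~\ref{Prop:Period} and the fact that $\underline{a_{e-1}}$ is not constant), so the only way $\phi(\underline a)$ and $\phi(\underline b)$ can agree on whether they equal $s$ is for the defining equations to coincide as stated. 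Once these two checkpoints are cleared, the choice $z=s-g(0)$, $w=s-g\!\left(\tfrac{p-1}{2}\right)$ finishes the proof.
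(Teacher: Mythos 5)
Your proposal is correct and takes essentially the same route as the paper: the paper also solves $g(0)+z=s$ and $g(\tfrac{p-1}{2})+w=s$ and then observes that the preimage of $s$ under $\phi$ (zero lower block with $x_{e-1}=0$, or nonzero lower block with $x_{e-1}=\tfrac{p-1}{2}$) is preserved by $\underline{a}\mapsto-\underline{a}$. Your explicit carry bookkeeping ($b_{e-1}(t)=-a_{e-1}(t)$ on $Z$, $b_{e-1}(t)=-a_{e-1}(t)-1$ off $Z$) is precisely the justification of the invariance step that the paper states without detail, so no changes are needed.
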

\begin{proof}
We solve the equations
\[\left\{
\begin{array}{l}
g(\frac{p-1}{2})+w=s \\
g(0)+z=s
\end{array}
\right.\]
to get the unique $z$ and $w$. As $g(x_{e-1})$ is a permutation polynomial and $\psi_{z,w}$ is
a two-value function. $\phi(x_0, \ldots, x_{e-1})=s$ if and only if the following two cases happen
\begin{align*}
&g(x_{e-1})=s-z\qquad\psi_{z, w}(x_0, \ldots, x_{e-2})=z,\\
&g(x_{e-1})=s-w\qquad\psi_{z, w}(x_0, \ldots, x_{e-2})=w.
\end{align*}
If and only if $(x_0, \ldots, x_{e-2}, x_{e-1})$ satisfies one of the following two conditions
\begin{equation*}
\begin{split}
&(x_0, \ldots, x_{e-2})=(0, \ldots, 0)\quad x_{e-1}=0, \\
&(x_0, \ldots, x_{e-2})\neq(0, \ldots, 0)\quad x_{e-1}=\frac{p-1}{2}.
\end{split}\tag{*}
\end{equation*}
When $\underline{b}=-\underline{a}$, the $e$-tuple $\big(a_{0}(t), \ldots a_{e-2}(t), a_{e-1}(t)\big)$ satisfies ($*$) if
and only if $\big(b_{0}(t), \ldots b_{e-2}(t), b_{e-1}(t)\big)$ satisfies ($*$). Then for time $t$, we have
$\phi(a_{0}(t), a_1(t),\ldots, a_{e-1}(t))=s$
if and only if $\phi(b_{0}(t), b_1(t), \ldots, b_{e-1}(t))=s$, i.e.,
$\phi(\underline{a_{0}}, \underline{a_1},\ldots, \underline{a_{e-1}})$ and
$\phi(\underline{b_{0}}, \underline{b_1},\ldots, \underline{b_{e-1}})$ are $s$-uniform.
\end{proof}

\begin{Rem}
In \cite[Theorem 21]{ZhQT}, Zheng, Qi and Tian give a counterexample of Theorem \ref{Thm:s-uniform},
if the condition that the coefficient of $x_{e-2}^{p-1}\cdots x_1^{p-1}x_0^{p-1}$ in
$\eta_{e-2}$ is equal to $(-1)^e\cdot\frac{p+1}{2}$. Their result is when
$$\phi(x_0, \ldots, x_{e-1})=x_{e-1}+(-1)^e(x_{e-2}^{p-1}-1)\cdots(x_0^{p-1}-1)-\frac{p-1}{2},$$
for $\underline{b}=-\underline{a}$, the corresponding compressing sequences are $0$-uniform.
If we let $g(x_{e-1})=x_{e-1}$ and $s=0$ in the above theorem, then $z=0$ and $w=\tfrac{p+1}{2}$.
From the polynomial representation of $\psi_{z, w}$, our result coincides with their's.
\end{Rem}

When $g(x_{e-1})$ is not a permutation polynomial and satisfies an additional condition in the following theorem,
there exist many choices of $\eta_{e-2}$ such that for different $\underline{a}, \underline{b}\in G'(f(x), p^e)$,
$\phi(\underline{a_{0}}, \underline{a_1},\ldots, \underline{a_{e-1}})$ and
$\phi(\underline{b_{0}}, \underline{b_1},\ldots, \underline{b_{e-1}})$ are $s$-uniform. For a nonempty subset $W$ of
$\{0, 1, \ldots, p-1\}$, we use $\psi_{z, W}$ to denote any function from $(\mathbb{Z}/(p))^{e-1}$ to $\mathbb{Z}/(p)$ satisfying
\begin{align*}
\psi_{z, W}(x_0, \ldots, x_{e-2})=\begin{cases}
z, & \text{if } (x_0, \ldots, x_{e-2})=(0, \ldots, 0),\\
w\in W, & \text{if } (x_0, \ldots, x_{e-2})\neq (0, \ldots, 0).
\end{cases}
\end{align*}

Given $s$ and $W$, when $W$ is a singleton $\{w\}$, $\psi_{z, W}$ is uniquely determined and equal to $\psi_{z, w}$.
But when $W$ has more than one element, there exist many such $\psi_{z, W}$, especially when the cardinality of $W$ is large.

\begin{Thm}\label{Thm:no-s-uniform}
Assume $g(x_{e-1})$ satisfies \\
{\rm (1)} it is not a permutation polynomial, i.e.,
the image set $I$ of $g(x_{e-1})$ is a proper subset of $\{0, 1, \ldots, p-1\}$,\\
{\rm (2)} for some $r \in I$, there exists $\lambda\neq 0, 1$, for $y\in \mathbb{Z}/(p)$,
such that $g(y)=r$ if and only if $g(\lambda\cdot y)=r$.\\
For an element $s$ in $\mathbb{Z}/(p)$, let $W$ be the nonempty set $\{w\mid s\notin w+I\}$, and
$$\phi(x_0, \ldots, x_{e-1})=g(x_{e-1})+\psi_{z, W}(x_0, \ldots, x_{e-2}).$$
Then we can choose suitable $z$ such that for $\underline{a}, \underline{b}\in G'((x), p^e)$
with $\underline{b}=\lambda\cdot\underline{a}$,
$\phi(\underline{a_{0}}, \underline{a_1},\ldots, \underline{a_{e-1}})$ and
$\phi(\underline{b_{0}}, \underline{b_1},\ldots, \underline{b_{e-1}})$ are $s$-uniform.
\end{Thm}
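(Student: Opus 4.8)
The plan is to pin down exactly which indices $t$ make $\phi(a_0(t),\ldots,a_{e-1}(t))$ equal to $s$, and then to show this set of indices is unchanged when $\underline{a}$ is replaced by $\underline{b}=\lambda\cdot\underline{a}$, for the right choice of $z$. First I would split on whether the prefix $(a_0(t),\ldots,a_{e-2}(t))$ is the all-zero tuple. If it is \emph{not}, then $\psi_{z,W}(a_0(t),\ldots,a_{e-2}(t))=w$ for some $w\in W$; by the definition of $W$ we have $s-w\notin I=\mathrm{Im}(g)$, so $g(a_{e-1}(t))=s-w$ is unsolvable and therefore $\phi(a_0(t),\ldots,a_{e-1}(t))=g(a_{e-1}(t))+w\neq s$. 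If the prefix \emph{is} all zero, then $\psi_{z,W}=z$ and $\phi(a_0(t),\ldots,a_{e-1}(t))=s$ is equivalent to $g(a_{e-1}(t))=s-z$. Hence $\phi(a_0(t),\ldots,a_{e-1}(t))=s$ if and only if $[a(t)]_{\lmod p^{e-1}}=0$ and $g(a_{e-1}(t))=s-z$, and the identical statement holds with $\underline{b}$ in place of $\underline{a}$.

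Next I would transport this condition across the relation $\underline{b}=\lambda\cdot\underline{a}$. Since $b(t)=[\lambda\cdot a(t)]_{\lmod p^e}$ and $\lambda$ is a unit modulo $p^{e-1}$, we get $[b(t)]_{\lmod p^{e-1}}=[\lambda\cdot[a(t)]_{\lmod p^{e-1}}]_{\lmod p^{e-1}}$, which vanishes exactly when $[a(t)]_{\lmod p^{e-1}}=0$; and when it does vanish, $a(t)=a_{e-1}(t)\cdot p^{e-1}$, so $b(t)=[\lambda\cdot a_{e-1}(t)]_{\lmod p}\cdot p^{e-1}$, i.e. $b_{e-1}(t)=[\lambda\cdot a_{e-1}(t)]_{\lmod p}$. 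Now choose $z=s-r$, where $r\in I$ and $\lambda\neq 0,1$ are the data supplied by hypothesis~(2). Then $\phi(a_0(t),\ldots,a_{e-1}(t))=s$ if and only if $[a(t)]_{\lmod p^{e-1}}=0$ and $g(a_{e-1}(t))=r$, whereas $\phi(b_0(t),\ldots,b_{e-1}(t))=s$ if and only if $[a(t)]_{\lmod p^{e-1}}=0$ and $g([\lambda\cdot a_{e-1}(t)]_{\lmod p})=r$. By hypothesis~(2), $g(y)=r$ holds if and only if $g(\lambda\cdot y)=r$ for every $y\in\mathbb{Z}/(p)$; applying this with $y=a_{e-1}(t)$ shows the two conditions coincide for every $t$, i.e. $\phi(\underline{a_0},\ldots,\underline{a_{e-1}})$ and $\phi(\underline{b_0},\ldots,\underline{b_{e-1}})$ are $s$-uniform. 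Because $\lambda\neq 1$ and $\underline{a}\neq\underline{0}$, the sequences $\underline{a}$ and $\underline{b}=\lambda\cdot\underline{a}$ are genuinely distinct elements of $G'(f(x),p^e)$, which is the point of the theorem.

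All the manipulations above are routine; the one place that carries the idea is the first step, namely the observation that the set $W$ has been rigged precisely so that $\phi$ cannot attain the value $s$ at any time with nonzero prefix $(a_0(t),\ldots,a_{e-2}(t))$. Once that is noticed, the entire $s$-uniformity question collapses to the single-variable statement about $g$ recorded in hypothesis~(2), and the carry/period apparatus of Section~2 that is needed for the injectivity results plays no role here at all. I would also note explicitly where the two hypotheses enter: hypothesis~(1) (that $g$ is not a permutation polynomial) is exactly what makes $W$ nonempty, so that $\phi$ is well defined with the intended behaviour, and hypothesis~(2) supplies both the scalar $\lambda$ that defines $\underline{b}$ and the value $r$ used to fix $z=s-r$. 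I do not expect any serious obstacle.
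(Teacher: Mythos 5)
Your proposal is correct and follows essentially the same route as the paper's proof: choose $z=s-r$, observe that the definition of $W$ forces any solution of $\phi=s$ to have all-zero prefix $(x_0,\ldots,x_{e-2})$ with $g(x_{e-1})=r$, and then invoke hypothesis (2) to transfer the condition between $\underline{a}$ and $\lambda\cdot\underline{a}$. You additionally spell out the step the paper leaves implicit — that $[b(t)]_{\lmod p^{e-1}}=0$ iff $[a(t)]_{\lmod p^{e-1}}=0$ and that in this case $b_{e-1}(t)=[\lambda\cdot a_{e-1}(t)]_{\lmod p}$ — which is a worthwhile clarification but not a different argument.
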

\begin{proof}
Let $z=s-r$. According to the definition of $W$, for each $w\in W$ and $i\in I$, we have $w+i\neq s$.
Thus if $\phi(x_0, \ldots, x_{e-1})=s$,
then $(x_0, \ldots, x_{e-2}, x_{e-1})=(0, \ldots, 0, y)$ with
$g(y)=r$. By condition ($2$), $g(y)=r$ if and only if $g(\lambda\cdot y)=r$. Thus when
$\underline{b}=\lambda\cdot\underline{a}$,
$\phi(a_0(t), a_1(t), \ldots, a_{e-1}(t))=s$ if and only if  $\phi(b_0(t), b_1(t), \ldots, b_{e-1}(t))=s$.
So they are $s$-uniform.
\end{proof}

We give an example of $g(x_{e-1})$ which satisfies the two conditions in the above theorem.

\begin{Cor}
Assume $g(x_{e-1})$ is not a permutation and $g(y)=g(0)$ if and only if $y=0$. For given $s$
in $\mathbb{Z}/(p)$, let $\phi$ defined as in the above theorem. Then for
$\underline{a}, \underline{b}\in G'(f(x), p^e)$ with $\underline{b}=\lambda\cdot \underline{a}$,
$\phi(\underline{a_{0}}, \underline{a_1},\ldots, \underline{a_{e-1}})$ and
$\phi(\underline{b_{0}}, \underline{b_1},\ldots, \underline{b_{e-1}})$ are $s$-uniform.
\end{Cor}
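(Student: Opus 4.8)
The goal is to show that the hypotheses on $g(x_{e-1})$ in Corollary are a special case of the two conditions in Theorem \ref{Thm:no-s-uniform}, so the conclusion follows immediately. The plan is to verify conditions (1) and (2) of Theorem \ref{Thm:no-s-uniform} directly from the hypotheses "$g$ is not a permutation" and "$g(y)=g(0)$ if and only if $y=0$", and then quote Theorem \ref{Thm:no-s-uniform}.

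First I would observe that condition (1) is literally the hypothesis that $g$ is not a permutation polynomial. For condition (2), the natural candidate is $r=g(0)$, which lies in the image set $I$. I need to produce some $\lambda\neq 0,1$ such that $g(y)=r$ iff $g(\lambda y)=r$ for all $y\in\mathbb{Z}/(p)$. By the hypothesis, the fibre $\{y : g(y)=g(0)\}$ equals $\{0\}$, so the condition "$g(y)=r$" holds only for $y=0$, and "$g(\lambda y)=r$" holds only for $\lambda y=0$, i.e. $y=0$ (since $\lambda\neq 0$ makes $y\mapsto \lambda y$ a bijection fixing $0$). Hence both sides of the biconditional in (2) describe exactly the singleton $\{y=0\}$, so (2) holds for every $\lambda\neq 0$; in particular we may pick any $\lambda\in\{2,\ldots,p-1\}$, which is available because $p$ is odd so $p\ge 3$. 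This is the one place to be slightly careful: we need $p\ge 3$ for such a $\lambda$ to exist, but that is already part of the standing assumption that $p$ is an odd prime.

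With both conditions verified, the $\phi$ in the statement of the Corollary is by construction the same $\phi$ as in Theorem \ref{Thm:no-s-uniform} (same $g$, same $s$, same $W=\{w : s\notin w+I\}$, and $\psi_{z,W}$ with $z=s-r=s-g(0)$). Applying Theorem \ref{Thm:no-s-uniform} with this $\lambda$ then gives that for $\underline{a},\underline{b}\in G'(f(x),p^e)$ with $\underline{b}=\lambda\cdot\underline{a}$, the compressing sequences $\phi(\underline{a_0},\ldots,\underline{a_{e-1}})$ and $\phi(\underline{b_0},\ldots,\underline{b_{e-1}})$ are $s$-uniform, which is exactly the assertion of the Corollary.

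There is essentially no obstacle here: the content of the Corollary is that the clean hypothesis "$g(y)=g(0)\iff y=0$" forces condition (2) of the theorem, which is a routine unwinding of the two fibres. The only minor points worth stating explicitly in the write-up are that $r=g(0)$ is the chosen element of $I$, that any $\lambda\neq 0,1$ works (so such a $\lambda$ exists since $p\ge 3$), and that $W$ is nonempty precisely because $g$ is not a permutation (so $I\neq\mathbb{Z}/(p)$ and there is some $w$ with $s\notin w+I$), which is what lets us invoke Theorem \ref{Thm:no-s-uniform} in the first place.
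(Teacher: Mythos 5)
Your proposal is correct and follows exactly the paper's own argument: take $r=g(0)$, note that the fibre $\{y: g(y)=g(0)\}=\{0\}$ is preserved by $y\mapsto\lambda y$ for any $\lambda\neq 0$, so condition (2) of Theorem \ref{Thm:no-s-uniform} holds, and then invoke that theorem. The extra care you take in checking that $W$ is nonempty and that a valid $\lambda$ exists is welcome but does not change the route.
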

\begin{proof}
As the preimage of $g(0)$ is $0$. For any $\lambda\ne 0$, the condition ($2$) in the above theorem is satisfied.
The result follows from the above theorem.
\end{proof}

We discuss for a certain $\phi(x_0, \ldots, x_{e-1})$ and different
$\underline{a}, \underline{b}\in G'(f(x), p^e)$,
$\phi(\underline{a_{0}}, \underline{a_1},\ldots, \underline{a_{e-1}})$ and
$\phi(\underline{b_{0}}, \underline{b_1},\ldots, \underline{b_{e-1}})$ can be $s$-uniform for how many $s$.
The following lemma about the sum of Legendre symbols can be found in \cite[Chapter 5]{IR}.

\begin{Lem}
For an odd prime $p$, we have
\begin{align*}
\sum_{x=0}^{p-1}\big(\frac{x^2+w}{p}\big)=\begin{cases}
p-1, & \text{if } p \mid w,\\
-1, & \text{if } p\nmid w.
\end{cases}
\end{align*}
\end{Lem}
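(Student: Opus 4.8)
The plan is to evaluate the character sum $S(w) = \sum_{x=0}^{p-1}\left(\frac{x^2+w}{p}\right)$ by interpreting each term as counting square roots. The key fact I would use is that for any $a \in \mathbb{Z}/(p)$, the number of solutions $y$ to $y^2 \equiv a \pmod p$ equals $1 + \left(\frac{a}{p}\right)$, with the convention $\left(\frac{0}{p}\right)=0$; this handles $a=0$ (one solution), $a$ a nonzero square (two solutions), and $a$ a nonresidue (no solutions) uniformly. Summing this identity over $a = x^2 + w$ as $x$ ranges over $\mathbb{Z}/(p)$ gives
\[
\sum_{x=0}^{p-1}\Bigl(1 + \Bigl(\tfrac{x^2+w}{p}\Bigr)\Bigr) = p + S(w) = \#\{(x,y) : y^2 = x^2 + w\},
\]
so it suffices to count points on the affine conic $y^2 - x^2 = w$, i.e. $(y-x)(y+x) = w$, over $\mathbb{Z}/(p)$.

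The main step is then the change of variables $u = y - x$, $v = y + x$. Since $p$ is odd, $2$ is invertible, so $(x,y)\mapsto(u,v)$ is a bijection of $(\mathbb{Z}/(p))^2$ onto itself, and the equation becomes $uv = w$. I would count solutions to $uv = w$ directly: if $p \nmid w$, then for each of the $p-1$ nonzero values of $u$ there is exactly one $v = w u^{-1}$, giving $p-1$ solutions; if $p \mid w$, then $uv = 0$ forces $u=0$ or $v=0$, giving $2p - 1$ solutions. Plugging back into $p + S(w) = \#\{uv = w\}$ yields $S(w) = -1$ when $p \nmid w$ and $S(w) = p-1$ when $p \mid w$, which is exactly the claimed formula.

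I expect no serious obstacle here; the only point requiring a little care is the bookkeeping around the convention $\left(\frac{0}{p}\right)=0$ and making sure the solution-counting identity $\#\{y : y^2 = a\} = 1 + \left(\frac{a}{p}\right)$ is applied correctly in all three cases, together with checking that the linear substitution is genuinely invertible (which is where oddness of $p$ is used). As the excerpt notes, this is a standard fact and can simply be cited from \cite[Chapter 5]{IR}, so a one-paragraph sketch along the above lines is all that is needed.
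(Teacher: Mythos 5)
Your proof is correct. Note that the paper does not actually prove this lemma at all --- it simply cites it from Ireland and Rosen, Chapter 5 --- so your argument supplies a self-contained justification where the paper offers only a reference. The route you take (converting the character sum into a point count on $y^2 = x^2 + w$ via the identity $\#\{y : y^2 = a\} = 1 + \bigl(\tfrac{a}{p}\bigr)$, then diagonalizing the conic with $u = y-x$, $v = y+x$ and counting solutions of $uv = w$) is the standard elementary proof, and all the bookkeeping checks out: $p + S(w)$ equals $p-1$ when $p \nmid w$ and $2p-1$ when $p \mid w$, giving $S(w) = -1$ and $S(w) = p-1$ respectively. The only ingredient worth stating explicitly is the convention $\bigl(\tfrac{0}{p}\bigr) = 0$, which you do, and the invertibility of the linear change of variables, which indeed uses only that $p$ is odd.
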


\begin{Thm}
Let $g(x_{e-1})=x_{e-1}^2$. Assume
$$\phi(x_0, \ldots, x_{e-1})=g(x_{e-1})+\psi_{0, w}(x_0, \ldots, x_{e-2}).$$
Then for $\underline{a}, \underline{b}\in G'(f(x),p^e)$ with $\underline{b}=-\underline{a}$,
and suitable $w$, there are $[\tfrac{p}{4}]+1$ elements $s$ such that
$\phi(\underline{a_{0}}, \underline{a_1},\ldots, \underline{a_{e-1}})$ and
$\phi(\underline{b_{0}}, \underline{b_1},\ldots, \underline{b_{e-1}})$ are $s$-uniform.
\end{Thm}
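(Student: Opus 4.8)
The plan is to determine how $\phi$ acts on $\underline a$ and on $\underline b=-\underline a$ through the $p$-adic digits, then to reduce $s$-uniformity to an elementary condition on the top digit, and finally to count the admissible $s$ using the Legendre-symbol lemma above. First I would split the indices $t$ according to whether $[a(t)]_{\lmod p^{e-1}}=0$. Writing $b(t)=[-a(t)]_{\lmod p^e}$ and expanding $p$-adically: if $(a_0(t),\ldots,a_{e-2}(t))=(0,\ldots,0)$ then also $(b_0(t),\ldots,b_{e-2}(t))=(0,\ldots,0)$ and $b_{e-1}(t)\equiv-a_{e-1}(t)\bmod p$, so $\psi_{0,w}$ returns $0$ on both sides and $\phi(a(t))=a_{e-1}(t)^2=\phi(b(t))$; such $t$ impose no constraint. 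If $(a_0(t),\ldots,a_{e-2}(t))\neq(0,\ldots,0)$, then the borrow in $p^e-a(t)$ forces $(b_0(t),\ldots,b_{e-2}(t))\neq(0,\ldots,0)$ and $b_{e-1}(t)\equiv-(a_{e-1}(t)+1)\bmod p$, whence $\phi(a(t))=a_{e-1}(t)^2+w$ while $\phi(b(t))=(a_{e-1}(t)+1)^2+w$.

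Next, by the observation recalled at the start of this section (from the proof of Lemma~\ref{Lem:highestlevel}), for every $v\in\{0,\ldots,p-1\}$ there is a time $t$ with $[a(t)]_{\lmod p^{e-1}}\neq 0$ and $a_{e-1}(t)=v$ — take $t$ with $\alpha(t)=k$ and $a(t)=vp^{e-1}+1$. Hence the two compressing sequences are $s$-uniform if and only if, for every $v\in\mathbb Z/(p)$, $v^2=s-w\iff(v+1)^2=s-w$. A short case check on whether $s-w$ is $0$, a nonzero square, or a nonsquare (the first and second cases failing, the second because $p$ is odd) shows this holds precisely when $s-w$ is a nonzero nonsquare, i.e.\ $s\notin Q+w$, where $Q=\{v^2:v\in\mathbb Z/(p)\}$. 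Since by the convention of this section we only count $s$ lying in the image $Q\cup(Q+w)$ of $\phi$, the admissible $s$ are exactly those of $Q\setminus(Q+w)$, and their number equals $|Q|-|Q\cap(Q+w)|=\tfrac{p+1}{2}-|Q\cap(Q+w)|$.

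It remains to evaluate $|Q\cap(Q+w)|$ for $w\neq0$ and to choose $w$ well. Writing $\chi=\bigl(\tfrac{\cdot}{p}\bigr)$ with $\chi(0)=0$, one has $\mathbf 1[x\in Q]=\tfrac12(1+\chi(x))+\tfrac12\mathbf 1[x=0]$; summing $\mathbf 1[s\in Q]\,\mathbf 1[s-w\in Q]$ over all $s$, isolating the two boundary terms $s=0$ and $s=w$, and using $\sum_s\chi\bigl(s(s-w)\bigr)=\sum_t\chi\bigl(t^2-\tfrac{w^2}{4}\bigr)=-1$ (the preceding lemma, valid since $p\nmid w$) gives $|Q\cap(Q+w)|=\tfrac14\bigl(p+1+\chi(w)+\chi(-w)\bigr)$. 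If $p\equiv1\pmod4$ then $\chi(-w)=\chi(w)$, and taking $w$ a nonsquare gives $|Q\cap(Q+w)|=\tfrac{p-1}{4}$, so the count is $\tfrac{p+1}{2}-\tfrac{p-1}{4}=\tfrac{p+3}{4}=[\tfrac{p}{4}]+1$. If $p\equiv3\pmod4$ then $\chi(-w)=-\chi(w)$, so $|Q\cap(Q+w)|=\tfrac{p+1}{4}$ for every $w\neq0$, and the count is $\tfrac{p+1}{2}-\tfrac{p+1}{4}=\tfrac{p+1}{4}=[\tfrac{p}{4}]+1$. In either case a suitable $w$ produces exactly $[\tfrac{p}{4}]+1$ admissible $s$, as asserted.

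The step I expect to be the most delicate is the digit bookkeeping in the first paragraph: checking that when the low part of $a(t)$ is nonzero the borrow in $p^e-a(t)$ sends the top digit to $-(a_{e-1}(t)+1)$ rather than $-a_{e-1}(t)$ — this extra $+1$ is exactly what separates the ``low part zero'' and ``low part nonzero'' regimes and makes the statement nontrivial. A secondary point requiring care is the treatment of the boundary terms $s=0$ and $s=w$ in the Legendre-symbol count, since $0\in Q$ means the naive indicator $\tfrac12(1+\chi)$ is incorrect there.
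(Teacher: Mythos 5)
Your proof is correct, and it reaches the same final count by the same Legendre-symbol computation as the paper (both arrive at $|Q\cap(Q+w)|=\tfrac14\bigl(p+1+\bigl(\tfrac{w}{p}\bigr)+\bigl(\tfrac{-w}{p}\bigr)\bigr)$ and then split on $p\bmod 4$), but the first half of your argument is genuinely different. The paper identifies the $s$-uniform values by invoking Theorem \ref{Thm:no-s-uniform} with $\lambda=-1$ and $r=s$, which only certifies that every $s\in I\setminus I_w$ works; it never verifies that the remaining $s$ in the image fail. You instead compute the $p$-adic digits of $-a(t)$ directly, obtaining $b_{e-1}(t)\equiv-a_{e-1}(t)$ when the low part vanishes and $b_{e-1}(t)\equiv-(a_{e-1}(t)+1)$ when it does not (the borrow analysis is right), and combine this with the surjectivity of $t\mapsto a(t)$ on times with $\alpha(t)=k$ (from the proof of Lemma \ref{Lem:highestlevel}, which does require the standing strong-primitivity hypothesis of this section) to get an exact characterization: $s$-uniformity holds precisely for $s\notin Q+w$. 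Your case check that $v^2=c\iff(v+1)^2=c$ fails for $c=0$ and for $c$ a nonzero square is sound for odd $p$. The net effect is that your argument proves the count is exactly $[\tfrac{p}{4}]+1$ among images of $\phi$, a slightly stronger and more self-contained statement than what the paper's proof literally establishes, at the cost of redoing by hand the digit bookkeeping that Theorem \ref{Thm:no-s-uniform} packages.
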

\begin{proof}
Let $I$ denote the image set of $x_{e-1}^2$, i.e., $I=\{x^2\mid x\in \mathbb{Z}/(p)\}$, and
$I_{w}=w+I=\{w+x^2\mid x\in \mathbb{Z}/(p)\}$. Then $|I|=|I_{w}|=\tfrac{p+1}{2}$.
As for each $r\in I$, $y^2=r$ if and only if $(-y)^2=r$. By theorem \ref{Thm:no-s-uniform},
we have that for each $s \in I\backslash{I_w}$,
$\phi(\underline{a_{0}}, \underline{a_1},\ldots, \underline{a_{e-1}})$ and
$\phi(\underline{b_{0}}, \underline{b_1},\ldots, \underline{b_{e-1}})$ are $s$-uniform.

Now we count the number of elements in $I\cap I_w$ with $w\neq 0$. We calculate the
sum $1+(\tfrac{w+x^2}{p})$ over all $x\in\mathbb{Z}/(p)$. Let $y=w+x^2$. If $x\neq 0$ and
$y\in I\backslash\{0\}$, the element $y$ is counted $4$ times. If $x=0$ and $y \in I$, $y$
is counted $2$ times. If $y=0$, then $y$ is counted $2$ times.
By the above lemma, then
\begin{align*}
|I\cap I_w| &= \frac{1}{4}\sum_{x\in\ \mathbb{Z}/(p)}(1+\big(\frac{w+x^2}{p}\big))+\frac{1+(\frac{w}{p})}{4}+\frac{1+(\frac{-w}{p})}{4}\\
&=\frac{p-1}{4}+\frac{2+(\frac{w}{p})+(\frac{-w}{p})}{4}\\
&=\frac{p+1+(\frac{w}{p})+(\frac{-w}{p})}{4}
\end{align*}

If $p\equiv 3\mod 4$, then $(\tfrac{w}{p})+(\tfrac{-w}{p})=0$. We have $|I\cap I_w|=\tfrac{p+1}{4}$ and
then $|I\backslash{I_w}|=\tfrac{p+1}{4}=[\tfrac{p}{4}]+1$.

If $p\equiv 1\mod 4$, choose $w$ such that $(\tfrac{w}{p})=(\tfrac{-w}{p})=-1$. Then
$|I\cap I_w|=\tfrac{p-1}{4}$. Thus $|I\backslash{I_w}|=\tfrac{p+3}{4}=[\tfrac{p}{4}]+1$.
The proof is complete.
\end{proof}

\section{Conclusions}
In this article, we consider the distribution properties of compressing sequences derived from
primitive sequences modulo odd prime powers. For strongly primitive polynomial $f(x)$ and compressing map
$$\phi(x_0, x_1, \ldots, x_{e-1})=g(x_{e-1})+\eta_{e-2}(x_0, x_1, \ldots, x_{e-2})$$
with $1\le \deg g\le p-1$, primitive sequences $\underline{a}=\underline{b}$ if and only if the compressing sequences
$\phi(a_0(t), \ldots, a_{e-1}(t))=\phi(b_0(t), \ldots, b_{e-1}(t))$ for all the $t$ with
$\alpha(t)=k$. When $\deg g=1$, we do not need $f(x)$ to be a strongly primitive polynomial. This result
improves the result in \cite[Theorem 5]{TQ}. For $s$-uniform property, when $g(x_{e-1})$
is a permutation polynomial, for a certain $\phi(x_0, \ldots, x_{e-1})$,
the compressing sequences of $\underline{a}$ and $-\underline{a}$ are  $s$-uniform.
When $g(x_{e-1})$ is not a permutation polynomial, there may exist many $\phi(x_0, \ldots, x_{e-1})$ such that
the compressing sequences of $\underline{a}$ and $\lambda\cdot\underline{a}$ are $s$-uniform. For $g(x_{e-1})=x_{e-1}^2$,
we can construct a compressing map $\phi(x_0, \ldots, x_{e-1})$ such that the compressing sequences of $\underline{a}$ and $-\underline{a}$
are $s$-uniform for $[\tfrac{p}{4}]+1$ different $s$ in the image of $\phi$.

\end{document}